\newcommand{\D}{\mathsf{D}}
\newcommand{\dd}{\text{d}}
\renewcommand{\Re}{\mbox{Re}}
\renewcommand{\Im}{\mbox{Im}}
\newtheorem{theorem}{Theorem}
\theoremstyle{definition} 
\title{Maximum Likelihood Detection in a Four-Dimensional
Stokes-Space Receiver}
\author{Amir Tasbihi~\IEEEmembership{Graduate Student Member, IEEE} and Frank R. Kschischang~\IEEEmembership{Fellow, IEEE}\thanks{The authors are with the
Edward S. Rogers Sr.\ Dept.\ of Electrical and Computer Engineering,
University of Toronto, Toronto, ON M5S 3G4, Canada.  Email:
\texttt{\{tasbihi,frank\}@ece.utoronto.ca}.  Submitted to
\emph{IEEE Trans. Commun.}, February 1, 2019.}}
\begin{document}
\maketitle
\begin{abstract}
The maximum likelihood detection rule for a four dimensional direct-detection optical front-end is derived. The four dimensions are two intensities and two differential phases. Three different signal processing algorithms, composed of symbol-by-symbol, sequence and successive detection, are discussed. To remedy dealing with special functions in the detection rules, an approximation for high signal-to-noise ratios (SNRs) is provided. Simulation results show that, despite the simpler structure of the successive algorithm, the resulting performance loss, in comparison with the other two algorithms, is negligible. For example, for an 8-ring/8-ary phase constellation, the complexity of detection reduces by a factor of 8, while the performance, in terms of the symbol error rate, degrades by 0.5~dB. It is shown that the high-SNR approximation is very accurate, even at low SNRs. The achievable rates for different constellations are computed and compared by the Monte Carlo method. For example, for a 4-ring/8-ary phase constellation, the achievable rate is 10 bits per channel use at an SNR of 25~dB, while by using an 8-ring/8-ary phase constellation and an error correcting code of rate 5/6, this rate is achieved at an SNR of 20~dB.   

\begin{IEEEkeywords}
Optical communication, non-coherent detection, Stokes space, maximum likelihood
detection.
\end{IEEEkeywords}
   
\end{abstract}

\clearpage
\begin{section}{Introduction}
\label{sec:introduction}
\IEEEPARstart{W}e derive the maximum likelihood (ML) detection rule for the four dimensional direct-detection receiver optical front-end described in~\cite{osman}. We propose three digital signal processing (DSP) algorithms: symbol-by-symbol, sequence (min-sum) and successive detection. For each proposed method, we find the likelihood (utility) function to be maximized for the ML detection.

Due to their inexpensive structures, non-coherent detection schemes have promising applications in short-haul ($<100$~km) data transmission, {\it e.g.,} intra data-center communication~\cite{future,6887234}. In addition, the demand for high data rates necessitates the usage of all degrees of freedom (DOF) for data transmission. 

Although exploiting the intensity or the phase of the transmitted light without using any local oscillator at the receiver was proposed earlier, exploiting both of them simultaneously in optical communication goes back to the early 2000s~\cite{1159095}. To combat the practical issue of precise adjustment in those techniques, {\it self-homodyne} detection was proposed in 2005~\cite{1432820}, which later was improved to exploit both of the polarizations for the data transmission in wavelength division multiplexing systems~\cite{5621405,5713203}. Similar to self-homodyne detection, {\it Stokes-vector direct detection} (SVDD) was introduced in 2014, taking into account the polarization rotation of the fiber~\cite{6887234,6936845}. Despite the simple structure of the receiver, SVDD (and also self-homodyne detection) devotes half of the available dimensions to the transmission of a pilot symbol. Later, a modified Stokes-space direct detection scheme was introduced in~\cite{osman} which, by transmitting a data symbol instead of the pilot, achieves a higher data-rate. However, exploiting all of its DOF is only possible under either non-realistic assumptions or by using a complex receiver. This issue was later resolved for the same optical front-end by additional processing in the DSP~\cite{tasbihi}. The present paper extends the results of~\cite{tasbihi} by determining the actual ML detection rule for the various schemes under study.

\IEEEpubidadjcol
The rest of the paper is organized as follows. The system model, including the transmitter, the channel and the receiver, is introduced in Sec.~\ref{sec:TxChRx}. In Sec.~\ref{sec:JointDetection} we discuss symbol-by-symbol ML detection. In Sec.~\ref{sec:MaxProd} we describe sequence detection, exploiting the min-sum algorithm on the factor graph of the system. To combat the complexity of symbol-by-symbol and sequence detectors, we propose a successive detection scheme in Sec.~\ref{sec:SuccessiveDetection}. Due to the existence of modified Bessel functions in the likelihood scores, we introduce an accurate and easy-to-compute approximation, suitable for operation in the moderate to large SNR regimes. These approximately ML decoders are described at the end of Secs.~\ref{sec:JointDetection}, \ref{sec:MaxProd}, and \ref{sec:SuccessiveDetection}. In Sec.~\ref{sec:fading}, we discuss about the fast fading behaviour of the fourth DOF subchannel. In Sec.~\ref{sec:Results}, we compare the discussed methods in previous sections via simulations. Finally, we provide concluding remarks in Sec.~\ref{sec:Conclusion}.


Through this paper, we adopt the following notational conventions.
\begin{itemize}  
\item Scalars: lower-case letters, {\it e.g.}, $a$ and $\rho$.
\item $|a|,\arg(a),\Re(a),\Im(a),a^\ast$: the magnitude, phase, real and imaginary parts and complex conjugate of the complex number $a$, respectively.
\item Sets: blackboard bold capital letters, {\it e.g.}, $\mathbb{A, B}$. In particular, $\mathbb{Z}$, $\mathbb{R}$, $\mathbb{R}^+$ and $\mathbb{C}$ denote integers, real numbers, non-negative reals, and complex numbers, respectively.
\item $\mathbb{A}$\ding{214}$\mathbb{B}$ : there is a bijection between $\mathbb{A}$ and $\mathbb{B}$.
\item Vectors: lower-case bold letters, {\it e.g.}, $\bm{v}$. The $i^{th}$ element of $\bm{v}$ is denoted by $\bm{v}(i)$. For $\bm{u},\bm{v}\in\mathbb{C}^n$, the inner product is defined as
$\langle\bm{u},\bm{v}\rangle\triangleq\sum_{i=1}^{n}\bm{u}(i)\bm{v}(i)^\ast.$
\item Matrices: upper-case bold letters, {\it e.g.}, $\bm{M}$. In addition, $\bm{I}_{n\times n}$ denotes the $n\times n$ identity matrix.
\item $|\bm{M}|$ and $\bm{M}^t$: the determinant and the transpose of $\bm{M}$.
\item Random variables: non-bold capital letters, {\it e.g.} $A$. Realizations are shown in the same lower-case letter, {\it e.g.,} if $A$ is a random variable then $a$ is its realization.
\item $a_{i:j}$: the sequence $a_i,a_{i+1},\ldots,a_{j-1},a_j$, where $i\leq j$.
\item $\D$: the unit-delay operator, {\it e.g.,} if $x[n]$ denotes a discrete-time signal then $\D x[n]=x[n-1]$.
\item We will extensively make use of the {\it Jones vector} representation of light~\cite{agrawal2012fiber}. For the Jones vector $\bm{v}$, $v_x$ and $v_y$ denote the X and Y polarizations, respectively.
\end{itemize}

\end{section}
\begin{section}{The System Model}
\label{sec:TxChRx}
In this section, we formulate the signal processing operations performed by the transmitter, the channel, and the receiver.
\begin{subsection}{The Transmitter}
\label{subsec:Transmitter} 
For simplicity, we discuss the base-band equivalent model. We assume the use of Nyquist pulses, {\it i.e.,} pulses without intersymbol interference, allowing for a discrete-time formulation corresponding to the sample times. We use an $n_r$-ring/$n_p$-ary phase constellation (See Fig.~\ref{fig:cons}) with equally-spaced squared radii as in~\cite{osman}. 
The radius set is 
\[\left\{r_1,r_1\sqrt{1+\delta^2},r_1\sqrt{1+2\delta^2},\ldots,r_1\sqrt{1+(n_r-1)\delta^2}\right\},\]
where $r_1$ and $\delta\in\mathbb{R}^{+}$, and the phase set is
\[\left\{0,\frac{2\pi}{n_p},\frac{4\pi}{n_p},\ldots,\frac{(n_p-1)2\pi}{n_p}\right\}.\]
The transmitter sends two points, $e_x$ and $e_y$, from the constellation over the X and Y polarizations, respectively. As a result, the transmitted symbol is $\bm{e}=[e_x, e_y]^t\in\mathbb{C}^2$. As $e_x$ and $e_y$ are complex numbers, they have a magnitude and a phase, providing four DOF to exploit. As with any non-coherent scheme, instead of the absolute phase, we use differential phase encoding. As a result, we encode our data in
\begin{enumerate}[i)]
\item $|e_x|$,
\item $|e_y|$,
\item $\theta\triangleq\arg(e_xe_y^\ast)$,
\item $\gamma\triangleq\arg(e_x\cdot\D e_y^\ast)$,
\end{enumerate}
which we refer to as the first up to the fourth dimension, respectively. The relationship among these dimensions are shown in Fig.~\ref{fig:phaseRelationE}. The fourth dimension necessitates an initial condition on a symbol block, which is achieved by transmitting a pilot symbol, {\it e.g.,} $\bm{e}_{\text{pilot}}=[r_1,r_1]^t$, at the beginning of the block. 
\begin{figure}
\centering
\subfloat[]{\includegraphics[scale=2]{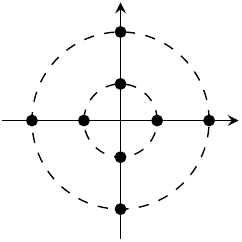}}
\subfloat[]{\includegraphics[scale=2]{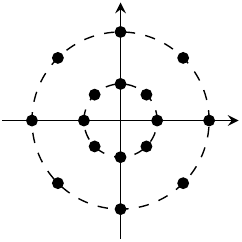}}
\subfloat[]{\includegraphics[scale=2]{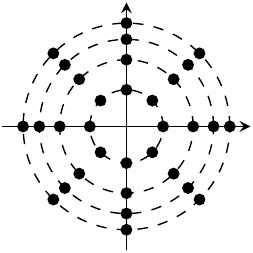}}
\caption{$n_r$-ring/$n_p$-ary phase constellations: (a) $2$-ring/$4$-ary, (b) $2$-ring/$8$-ary, (c) $4$-ring/$8$-ary.}
\label{fig:cons}
\end{figure}
\begin{figure}
\centering
\subfloat[\label{fig:phaseRelationE}]{%
       \includegraphics[scale=1]{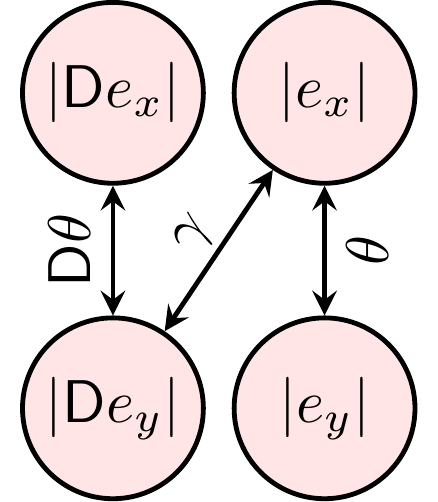}}
\subfloat[\label{fig:phaseRelationK}]{%
       \includegraphics[scale=1]{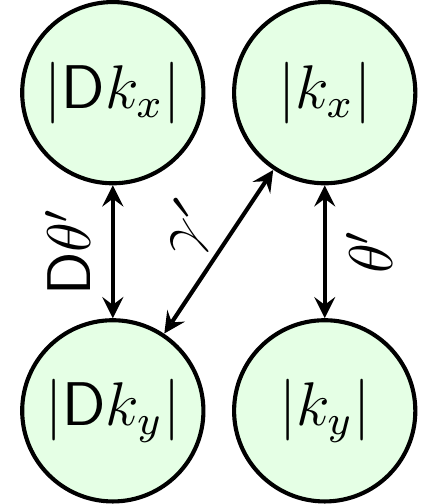}}
\subfloat[\label{fig:phaseRelationR}]{%
       \includegraphics[scale=1]{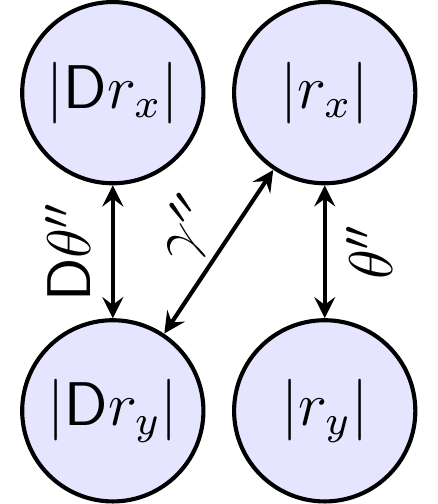}}
\caption{The four dimensions (a) at the transmitter, (b) before the amplifier, and (c) after the amplifier.}
\end{figure}

\end{subsection}

\begin{subsection}{The Channel}
\label{subsec:Channel}
We adopt a linear channel model, neglecting fiber nonlinearities, as is appropriate for short-haul data transmission. The random birefringence of single mode fibers impose a linear transformation on the input Jones vector. In particular, in the absence of noise, the output Jones vector, $\bm{k}$, can be written as
\begin{equation}\bm{k}=\bm{H}\bm{e},
\label{eq:ChannelRotation}
\end{equation}
where 
\begin{equation*}
\bm{H}=\left[\begin{array}{cc}
a & b\\
-b^\ast & a^\ast 
\end{array}\right]
\end{equation*}
is the channel rotation matrix, such that $a,b\in\mathbb{C}$ and $|a|^2+|b|^2=1$~\cite{PMDcomp,gordon2000pmd,kikuchi,osman}. The matrix $\bm{H}$ is nonsingular; so if $\mathbb{E}$ and $\mathbb{K}$ denote all possible $\bm{e}$ and $\bm{k}$ vectors, respectively, then $\mathbb{E}$\ding{214}$\mathbb{K}$. The {\it coherence time} of the channel matrix is assumed to be much larger than a symbol duration, so we can neglect its variation over the transmission of a sequence of symbols. Fiber loss is not considered in (\ref{eq:ChannelRotation}), as it is compensated by a receiver amplifier. The amplifier contaminates $\bm{k}$ with amplified spontaneous emission noise, $\bm{n}\in\mathbb{C}^2$, which is a zero-mean additive white Gaussian noise with the covariance matrix $\sigma^2\bm{I}_{2\times2}$~\cite{agrawal2012fiber}. Its output is $\bm{r}=\bm{k}+\bm{n}$, hence
\[\bm{r}=\bm{H}\bm{e}+\bm{n}.\]
The angles $\theta', \gamma', \theta'',$ and $\gamma''$ are defined in the same way as their ``$\bm{e}$-domain counterparts'' as
\begin{align*}
\begin{array}{ll}
\theta'\triangleq\arg(k_xk_y^\ast), & \hspace{-0.6ex}\gamma'\triangleq\arg(k_x\cdot\D k_y^\ast),\\
\theta''\triangleq\arg(r_xr_y^\ast),& \hspace{-0.6ex}\gamma''\triangleq\arg(r_x\cdot\D r_y^\ast),
\end{array}
\end{align*}
as shown in Figs.~\ref{fig:phaseRelationK} and \ref{fig:phaseRelationR}. The relation among $\bm{e}, \bm{k}$ and $\bm{r}$ is shown in Fig.~\ref{fig:TxRxModel}. It is assumed that the receiver knows the channel parameters, {\it i.e.}, $a$ and $b$; such knowledge can be attained by transmitting training symbols at the beginning of data blocks~\cite{6936845}.
\begin{figure}
\centering
\includegraphics[scale=1]{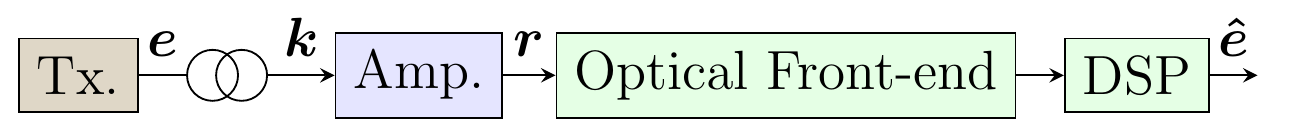}
\caption{General transceiver model.}
\label{fig:TxRxModel}
\end{figure} 

\end{subsection}

\begin{subsection}{The Optical Front-end}
\label{subsec:OpticalRx}
We use the same optical front-end as proposed in~\cite{osman} and shown in Fig.~\ref{fig:rxa}. Its components are:
\begin{itemize}
\item polarization beam splitter (PBS), which splits the input Jones vector into its $X$ and $Y$ polarizations;
\item photo-diode (PD), which transforms its input, $u\in\mathbb{C}$, to its output, $|u|^2$;
\item balanced photo-detector (BPD), which transform its two inputs, $u$ and $v\in\mathbb{C}$, to its output, $|u|^2-|v|^2$;
\item $90^\circ$ optical hybrid, which transforms its two inputs, $u$ and $v\in\mathbb{C}$, to its four outputs, $(u+v,u-v,u+iv,u-iv)$.
\end{itemize}
The outputs of the optical front-end, $w_{1:6}$, are six real-valued numbers which are processed in the back-end DSP to detect $\bm{e}$, denoted by $\bm{\hat{e}}$ (see Fig.~\ref{fig:TxRxModel}). The relation between $w_{1:6}$ and $\bm{r}$ is~\cite{tasbihi} 
\begin{align}
\begin{array}{lll}
\hspace{-0.6ex}w_1=|r_x|^2,& w_3= 2\Re(r_xr_y^\ast),& w_5=2\Re(r_x\cdot\D r_y^\ast),\\
\hspace{-0.6ex}w_2= |r_y|^2,& w_4= 2\Im(r_xr_y^\ast),& w_6= 2\Im(r_x\cdot\D r_y^\ast).\\ 
\end{array}
\label{eq:w16relation}
\end{align} 
\begin{figure}
\centering
\includegraphics[scale=1]{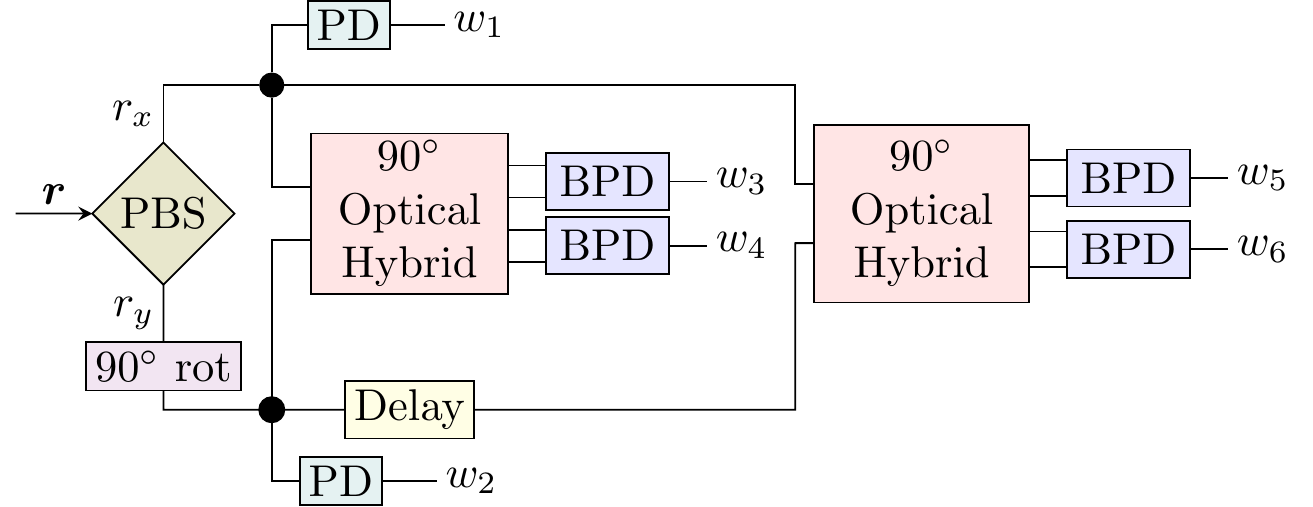}
\caption{The optical front-end, presented in~\cite{osman}.}
\label{fig:rxa}
\end{figure}

\end{subsection}
\end{section}

\begin{section}{ML Detection}
\label{sec:ML}
In this section, we derive the ML detector for the transmitted data under three processing assumptions, after observing $w_{1:6}$. Based on the transmitted and the received quantities, we define the vectors
\begin{equation*}
\begin{aligned}
\bm{d}_e&\triangleq[|e_x|,|e_y|e^{i\theta},|\D e_y|e^{i\gamma}]^t\in\mathbb{R}^+\times\mathbb{C}^2,\\
\bm{d}_k&\triangleq[|k_x|,|k_y|e^{i\theta'},|\D k_y|e^{i\gamma'}]^t\in\mathbb{R}^+\times\mathbb{C}^2,\\
\bm{d}_r&\triangleq[|r_x|,|r_y|e^{i\theta''},|\D r_y|e^{i\gamma''}]^t\in\mathbb{R}^+\times\mathbb{C}^2,\\
\hat{\bm{d}}_k&\triangleq[|k_x|,|k_y|e^{i\theta'}]^t\in\mathbb{R}^+\times\mathbb{C},
\end{aligned}
\end{equation*}
and
\begin{equation*}
\hat{\bm{d}}_r\triangleq[|r_x|,|r_y|e^{i\theta''}]^t\in\mathbb{R}^+\times\mathbb{C}.
\end{equation*}
Note that $w_{1:6}$ and $\bm{d}_r$ are in one-to-one correspondence. The relationship among $\langle\bm{d}_k,\bm{d}_r\rangle, \langle\hat{\bm{d}}_k,\hat{\bm{d}}_r\rangle$, and the components of $\bm{d}_k$ and $\bm{d}_r$ is depicted in Fig.~\ref{fig:triangles}, to be used in later sections. 
\begin{figure}
\centering
\includegraphics[scale=1]{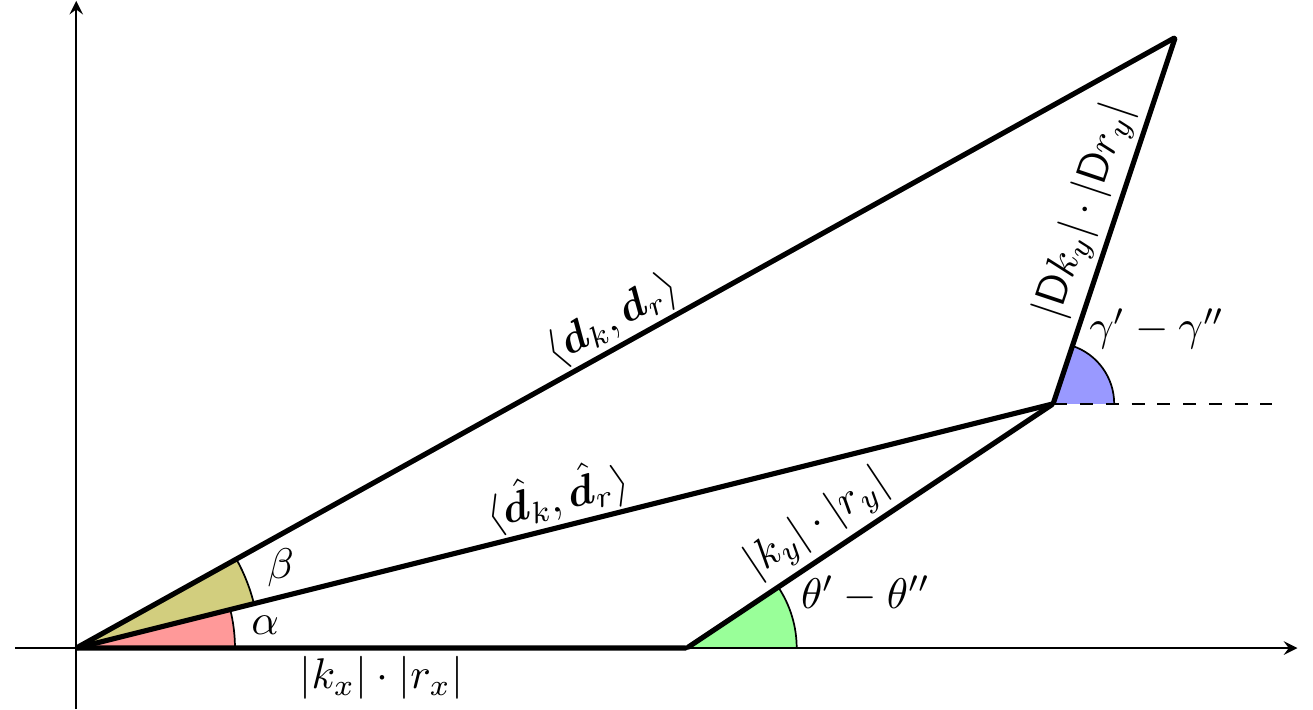}
\caption{Relationship among $\langle\bm{d}_k,\bm{d}_r\rangle, \langle\hat{\bm{d}}_k,\hat{\bm{d}}_r\rangle$, and the components of $\bm{d}_k$ and $\bm{d}_r$ in the complex plane.}
\label{fig:triangles}
\end{figure}

\begin{subsection}{Symbol-by-symbol ML Detection}
\label{sec:JointDetection}
In this section, we discuss about the symbol-by-symbol detection of all four dimensions. As $\mathbb{E}$\ding{214}$\mathbb{K}$, for the ease of computation, first we decide on $\bm{d}_k$; after that, by a bijection we find $\bm{d}_e$. In this process $|\D k_y|$ is fixed as it is decoded in the previous time slot. 

By the definition of conditional PDF, we can write the likelihood function as 
\begin{align}
& f(|r_x|,|r_y|,\theta'',\gamma''\mid \bm{d}_k,|\D r_y|)=\nonumber\\
& f(|r_x|,|r_y|\mid \bm{d}_k,|\D r_y|)\cdot f(\theta'',\gamma''\mid \bm{d}_k,|\D r_y|,|r_x|,|r_y|).
\label{eq:conditional}
\end{align}
In Theorems~\ref{thrm:radii} and~\ref{thrm:phase}, we find $f(|r_x|,|r_y|\mid \bm{d}_k,|\D r_y|)$ and $f(\theta'',\gamma''\mid \bm{d}_k,|\D r_y|,|r_x|,|r_y|)$, respectively. After that, we can find the likelihood function by~(\ref{eq:conditional}).
\begin{theorem}
\label{thrm:radii}
\begin{equation}
f(|r_x|,|r_y|\mid \bm{d}_k,|\D r_y|)=
\frac{|r_x|\cdot|r_y|}{\sigma^4}\exp\left(\frac{-(|r_x|^2+|r_y|^2+|k_x|^2+|k_y|^2)}{2\sigma^2}\right)
\cdot\mathcal{I}_0(\lambda_x)\mathcal{I}_0(\lambda_y),
\label{eq:thrm1}
\end{equation}
where $\mathcal{I}_0(\cdot)$ denotes the modified Bessel function of order zero and $\lambda_u=\frac{|r_u|\cdot|k_u|}{\sigma^2}$ for $u\in\{x,y\}$.
\end{theorem}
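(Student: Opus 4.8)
The plan is to exploit the circular symmetry of the additive noise and its independence across the two polarizations. First I would note that conditioning on $\bm{d}_k$ fixes the magnitudes $|k_x|$ and $|k_y|$ (along with the phases $\theta'$, $\gamma'$ and $|\D k_y|$), and that the current-slot quantities $r_x,r_y$ are statistically independent of the previous-slot quantity $\D r_y$ because the noise is white in time; hence the conditioning on $|\D r_y|$ is inert for this marginal. Since $r_x=k_x+n_x$ and $r_y=k_y+n_y$ with $n_x,n_y$ independent, the pair $(|r_x|,|r_y|)$ is \emph{conditionally independent} given $\bm{d}_k,|\D r_y|$. This reduces the claim to showing that each $|r_u|$ is Rician, depending on the conditioning only through $|k_u|$.

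To obtain a single magnitude marginal I would start from the per-polarization Gaussian density implied by the noise model,
\[
f(r_x\mid k_x)=\frac{1}{2\pi\sigma^2}\exp\!\left(-\frac{|r_x-k_x|^2}{2\sigma^2}\right),
\]
regarded as a density on the complex plane. Passing to polar coordinates $r_x=|r_x|e^{i\phi}$ with area element $|r_x|\,\dd|r_x|\,\dd\phi$ and integrating $\phi$ over $[0,2\pi)$ gives the marginal of $|r_x|$. Expanding $|r_x-k_x|^2=|r_x|^2+|k_x|^2-2|r_x|\,|k_x|\cos(\phi-\arg k_x)$ pulls the term $|r_x|^2+|k_x|^2$ outside the integral, leaving an angular integral of $\exp\!\big(\lambda_x\cos(\phi-\arg k_x)\big)$.

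The computational heart of the argument is recognizing this angular integral as the integral representation of the modified Bessel function, $\mathcal{I}_0(\lambda_x)=\frac{1}{2\pi}\int_0^{2\pi}e^{\lambda_x\cos\psi}\,\dd\psi$, whose shift-invariance in $\psi$ removes all dependence on $\arg k_x$. This produces the Rician form
\[
f(|r_x|\mid k_x)=\frac{|r_x|}{\sigma^2}\exp\!\left(-\frac{|r_x|^2+|k_x|^2}{2\sigma^2}\right)\mathcal{I}_0(\lambda_x),
\]
and an identical computation yields the analogue for $|r_y|$. Multiplying the two conditionally independent marginals and combining the exponentials gives exactly~(\ref{eq:thrm1}).

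I expect the only genuine subtlety to be the bookkeeping that justifies discarding the irrelevant conditioning variables—confirming that the absolute phases carried implicitly in $\bm{d}_k$ and the quantity $|\D r_y|$ have no effect on the magnitude marginals—rather than the Bessel integral itself, which is routine once the angular integral has been isolated.
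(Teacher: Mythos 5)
Your proof is correct and takes essentially the same route as the paper: establish that, given the conditioning, $|r_x|$ and $|r_y|$ are conditionally independent with each magnitude depending only on $|k_u|$, then multiply two Rician densities to obtain~(\ref{eq:thrm1}). The only difference is that you derive the Rician marginal from scratch via polar coordinates and the integral representation of $\mathcal{I}_0$, whereas the paper simply cites it, and both arguments dispose of the inert conditioning variables ($\theta'$, $\gamma'$, $|\D r_y|$) in the same way.
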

\begin{proof}
Let $R_u=|R_u|e^{i\Psi_u}$ and $K_u=|K_u|e^{i\Phi_u}, u\in\{x,y\},$ be the random variables representing the received and the transmitted (in $\mathbb{K}$ domain) signals. Note that $R_x$ and $R_y$ are independent complex Gaussian random variables, with means $k_u=|k_u|e^{i\phi_u}$ and each one has a covariance matrix $\sigma^2\bm{I}_{2\times2}$. The radius has a Rician distribution with the PDF given as~\cite{simon}
\begin{equation}
f(|r_u|~\mid~ |k_u|)=\frac{|r_u|}{\sigma^2}\exp\left(\frac{-(|r_u|^2+|k_u|^2)}{2\sigma^2}\right)\mathcal{I}_0(\lambda_u).
\label{eq:Rician}
\end{equation}
In addition, given $|K_u|$, $|R_u|$ is independent of other parameters in $(\bm{d}_k, |\D r_y|)$, so we have
\[f(|r_x|,|r_y|\mid \bm{d}_k,|\D r_y|)=f(|r_x|\mid |k_x|)\cdot f(|r_y|\mid |k_y|),\]
from which~(\ref{eq:thrm1}) is obtained by substituting from~(\ref{eq:Rician}).
\end{proof}
\begin{theorem}
\label{thrm:phase}
\begin{equation*}
f(\theta'',\gamma''\mid \bm{d}_k,|\D r_y|,|r_x|,|r_y|)=\frac{\mathcal{I}_0(\frac{|\langle\bm{d}_k,\bm{d}_r\rangle|}{\sigma^2})}{4\pi^2\mathcal{I}_0(\lambda_x)\mathcal{I}_0(\lambda_y)\mathcal{I}_0(\D\lambda_y)}.
\end{equation*}
\end{theorem}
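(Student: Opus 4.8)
The plan is to collapse the joint density of the two observed differential phases into a single one-dimensional integral over the \emph{unobserved} absolute phase, and to recognize that integral as a modified Bessel function whose argument is exactly $|\langle\bm{d}_k,\bm{d}_r\rangle|$.

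First I would write each relevant component in polar form, $r_x=|r_x|e^{i\psi_x}$, $r_y=|r_y|e^{i\psi_y}$, $\D r_y=|\D r_y|e^{i\D\psi_y}$, with the means $k_x,k_y,\D k_y$ carrying phases $\phi_x,\phi_y,\D\phi_y$, so that $\theta''=\psi_x-\psi_y$, $\gamma''=\psi_x-\D\psi_y$, $\theta'=\phi_x-\phi_y$, and $\gamma'=\phi_x-\D\phi_y$. Because $R_x$, $R_y$, and $\D R_y$ are mutually independent circularly-symmetric Gaussians (the noise is white across both polarizations and both time slots), the polar-form joint density factors; dividing out the Rician magnitude marginals of Theorem~\ref{thrm:radii} leaves the conditional phase law as a product of three von Mises (Tikhonov) factors, $f(\psi_u\mid|r_u|)=\exp(\lambda_u\cos(\psi_u-\phi_u))/(2\pi\mathcal{I}_0(\lambda_u))$, where $\D\lambda_y=|\D r_y||\D k_y|/\sigma^2$. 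This step already produces the denominator $4\pi^2\mathcal{I}_0(\lambda_x)\mathcal{I}_0(\lambda_y)\mathcal{I}_0(\D\lambda_y)$.

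Next I would change variables from $(\psi_x,\psi_y,\D\psi_y)$ to $(\psi_x,\theta'',\gamma'')$; this map is linear and unimodular, so its Jacobian is one. Since $\bm{d}_k$ pins down only the differential phases $\theta',\gamma'$ and the front-end output~(\ref{eq:w16relation}) records only $\theta'',\gamma''$ (never the absolute reference $\psi_x$), I would then marginalize the latent $\psi_x$ over $[0,2\pi)$. Using $A\cos\psi_x+B\cos(\psi_x+\beta)+C\cos(\psi_x+\gamma)=\Re\!\left[e^{i\psi_x}(A+Be^{i\beta}+Ce^{i\gamma})\right]$ with $A=|r_x||k_x|$, $B=|r_y||k_y|$, $C=|\D r_y||\D k_y|$, $\beta=\theta'-\theta''$, and $\gamma=\gamma'-\gamma''$, the three sinusoids collapse to a single $M\cos(\psi_x-\Omega)$, and $\int_0^{2\pi}\exp((M/\sigma^2)\cos(\psi_x-\Omega))\,d\psi_x=2\pi\mathcal{I}_0(M/\sigma^2)$. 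Matching coefficients identifies $A+Be^{i\beta}+Ce^{i\gamma}=\langle\bm{d}_k,\bm{d}_r\rangle$, hence $M=|\langle\bm{d}_k,\bm{d}_r\rangle|$ (cf.\ Fig.~\ref{fig:triangles}); absorbing the surviving $2\pi$ into the denominator gives the claimed formula.

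The delicate point is the bookkeeping of the two nuisance references: the unknown transmitted phase $\phi_x$ and the discarded received phase $\psi_x$. I would dispose of $\phi_x$ at the outset by rotational symmetry---shifting $\phi_x$ rotates $k_x,k_y,\D k_y$ in unison and so leaves the law of $(\theta'',\gamma'')$ invariant, letting me set $\phi_x=0$ without loss of generality (equivalently, $\phi_x$ enters only through $\Omega$ and is annihilated by the $\psi_x$ integral). The one genuine computation is then the single Bessel integral, and the only thing that must be checked carefully is that the amplitude of the combined sinusoid coincides exactly with the modulus of the Hermitian inner product $\langle\bm{d}_k,\bm{d}_r\rangle$.
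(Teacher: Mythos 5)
Your proposal is correct and follows essentially the same route as the paper's proof: a product of three von Mises conditional phase densities, a unimodular change of variables to $(\psi_x,\theta'',\gamma'')$, collapse of the three sinusoids into one via $\Re\left[e^{i\psi_x}\langle\bm{d}_k,\bm{d}_r\rangle\right]$, and marginalization of the absolute phase to produce $2\pi\mathcal{I}_0\left(|\langle\bm{d}_k,\bm{d}_r\rangle|/\sigma^2\right)$. The only cosmetic difference is that the paper introduces the dummy variable $\Omega=\Psi_x$ explicitly and disposes of the phase offset $-\phi_x+\alpha+\beta$ by periodicity of the cosine, which is exactly your rotational-symmetry remark.
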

\begin{proof}
See Appendix~\ref{app0}.
\end{proof}
By using Theorems~\ref{thrm:radii} and~\ref{thrm:phase}, and (\ref{eq:conditional}), we have
\begin{equation} 
 f(|r_x|,|r_y|,\theta'',\gamma''\mid \bm{d}_k,|\D r_y|)=
\frac{|r_x|\cdot|r_y|}{4\pi^2\sigma^4}\frac{\mathcal{I}_0\left(\frac{|\langle\bm{d}_k,\bm{d}_r\rangle|}{\sigma^2}\right)\exp\left(\frac{-(|r_x|^2+|r_y|^2+|k_x|^2+|k_y|^2)}{2\sigma^2}\right)}{\mathcal{I}_0(\D\lambda_y)}.
\label{eq:4d}
\end{equation}
Note that $|\langle\bm{d}_k,\bm{d}_r\rangle|$ is the magnitude of the correlation of the observation, $\bm{d}_r$, and the hypothesis, $\bm{d}_k$ (see Fig.~\ref{fig:triangles}.) 

To do ML symbol-by-symbol detection, we must solve 
\begin{equation}
\begin{array}{cc}
\underset{\bm{d}_k}{\arg\max}& f(w_{1:6} \mid \bm{d}_k,|\D r_y|)\\
\mbox{subject to}&|\bm{d}_k(3)|=|\D k_y|.
\end{array}
\label{eq:actualopt}
\end{equation} 
From (\ref{eq:w16relation}), we see that given $|\D r_y|$, there is a bijection between $w_{1:6}$ and $(|r_x|,|r_y|,\theta'',\gamma'')$, which allows us to rewrite (\ref{eq:actualopt}) as
\begin{equation}
\begin{array}{cc}
\underset{\bm{d}_k}{\arg\max} & f(|r_x|,|r_y|,\theta'',\gamma''\mid \bm{d}_k,|\D r_y|)\\
\mbox{subject to} & |\bm{d}_k(3)|=|\D k_y|.
\end{array}
\label{eq:actualopt2}
\end{equation}
Noting that $|\bm{d}_k(3)|$ is constant, by (\ref{eq:4d}) and eliminating the common factors among all hypotheses, (\ref{eq:actualopt2}) is equivalent to 
\begin{equation}
\begin{array}{cc}
\underset{\bm{d}_k}{\arg\min} & |\bm{d}_k|^2-2\sigma^2\ln\left(\mathcal{I}_0\left(\frac{|\langle\bm{d}_k,\bm{d}_r\rangle|}{\sigma^2}\right)\right)\\
\mbox{subject to} & |\bm{d}_k(3)|=|\D k_y|,
\end{array}
\label{eq:actualopt3}
\end{equation}
which in practice can be solved by examining all possible $\bm{d}_k$'s that satisfy the condition.

After finding $\bm{d}_k$ from (\ref{eq:actualopt3}), we find the equivalent $\bm{d}_e$. Note that~\cite{osman}
\begin{equation}
\label{eq:noisefree3D}
\left[\begin{array}{c}
|k_x|^2\\|k_y|^2\\2|k_x|\cdot|k_y|\cos(\theta')\\2|k_x|\cdot|k_y|\sin(\theta')
\end{array}\right]=\bm{M}\left[\begin{array}{c}
|e_x|^2\\
|e_y|^2\\
2|e_x|\cdot|e_y|\cos(\theta)\\
2|e_x|\cdot|e_y|\sin(\theta)
\end{array}\right],
\end{equation}
where $\bm{M}=$
\[\left[\begin{array}{cccc}
|a|^2 & |b|^2 & \Re(ab^\ast) & -\Im(ab^\ast)\\
|b|^2 & |a|^2 & -\Re(ab^\ast) & \Im(ab^\ast)\\
-2\Re(ab) & 2\Re(ab) & \Re(a^2-b^2) & -\Im(a^2+b^2)\\
-2\Im(ab) & 2\Im(ab) & \Im(a^2-b^2) & \Re(a^2+b^2)
\end{array}\right].\]
Hence, after decoding $(|k_x|,|k_y|,\theta')$, by using (\ref{eq:noisefree3D}) we can find $(|e_x|,|e_y|,\theta)$. To decide on $\gamma$, we note that~\cite{tasbihi}
\begin{equation}
\label{eq:noisefree4D3}
|k_x|\cdot|\D k_y|\exp(i\gamma')=\exp(i\gamma)\left[\begin{array}{cccc}
a^2 & -b^2 & -ab & ab
\end{array}\right]\bm{\ell},
\end{equation}
where
\[\bm{\ell}=\left[\begin{array}{c}
|e_x|\cdot|\D e_y|\\
|e_y|\cdot|\D e_x|\cdot\exp(-i(\theta+\D\theta))\\
|e_x|\cdot|\D e_x|\cdot\exp(-i\D\theta)\\
|e_y|\cdot|\D e_y|\cdot\exp(-i\theta)
\end{array}\right].\]
After decoding $\gamma'$ from (\ref{eq:actualopt3}) and finding $(|e_x|,|e_y|,|\theta|)$ from (\ref{eq:noisefree3D}), the only unknown in (\ref{eq:noisefree4D3}) is $\gamma$, which can easily be solved for.
\paragraph*{High SNR approximation} For large arguments, $\mathcal{I}_0(\cdot)$ can be approximated as~\cite[eq.~9.7.1]{abramowitz1965handbook} 
\begin{equation}
\mathcal{I}_0(x)\simeq\frac{e^x}{\sqrt{2\pi x}}\left(1+\mathcal{O}(x^{-1})\right).
\label{eq:i0}
\end{equation}
By neglecting $\mathcal{O}(x^{-1})$ terms in~(\ref{eq:i0}) and noting that
\begin{equation}\lim_{x\rightarrow\infty}\frac{x-\ln\left(\sqrt{2\pi x}\right)}{x}=1,\label{eq:asymlim}\end{equation}
we can approximate~(\ref{eq:actualopt3}) at high SNRs as
\begin{equation}
\begin{array}{cc}
\underset{\bm{d}_k}{\arg\min} & |\bm{d}_k|^2-2|\langle\bm{d}_k,\bm{d}_r\rangle|\\
\mbox{subject to} & |\bm{d}_k(3)|=|\D k_y|.
\end{array}
\label{eq:actualopt4}
\end{equation} 
Despite the ``similarity'' of~(\ref{eq:actualopt4}) and the minimum-distance decoder, they behave differently. By eliminating $|\bm{d}_r|^2$ in the expansion of $|\bm{d}_k-\bm{d}_r|^2$, the minimum-distance decoder solves 
\begin{equation*}
\begin{array}{cc}
\underset{\bm{d}_k}{\arg\min} & |\bm{d}_k|^2-2\Re\left(\langle\bm{d}_k,\bm{d}_r\rangle\right)\\
\mbox{subject to} & |\bm{d}_k(3)|=|\D k_y|,
\end{array}
\end{equation*}  
which is different from~(\ref{eq:actualopt4}). For example, for
$\bm{d}_{k1}=\left[3.5,1.2e^{-3i},2.8e^{-2i}\right]^t$,
$\bm{d}_{k2}=\left[2.1,2.8e^{0.5i},2.8e^{3i}\right]^t$,
$\bm{d}_{r}=\left[0.6,1.8e^{-2i},1.8e^{3i}\right]^t$,
we have
\[|\bm{d}_{k1}|^2-2|\langle\bm{d}_{k1},\bm{d}_r\rangle|<|\bm{d}_{k2}|^2-2|\langle\bm{d}_{k2},\bm{d}_r\rangle|,\]
while
\[|\bm{d}_{k1}|^2-2\Re\left(\langle\bm{d}_{k1},\bm{d}_r\rangle\right)>|\bm{d}_{k2}|^2-2\Re\left(\langle\bm{d}_{k2},\bm{d}_r\rangle\right).\]
As a result, the minimum-distance decoder maps $\bm{d}_r$ to $\bm{d}_{k2}$, while the ML decoder maps it to $\bm{d}_{k1}$.

\end{subsection}
\begin{subsection}{Sequence ML Detection}
\label{sec:MaxProd}
In this section, we show how to decode a sequence of transmitted data by using the min-sum algorithm on the factor-graph of the system~\cite{factorgraph}. The sequence comprises two types of symbols: a pilot symbol which is sent at the beginning of the sequence, and is known to the receiver (see Sec.~\ref{subsec:Transmitter}), and data symbols.\\

The flow-graph of the system for a sequence of length four is shown at Fig.~\ref{fig:flowgraph}. {\it Variable nodes} (v-nodes) and {\it check nodes} (c-nodes) are shown with circles and rectangles respectively. The channel matrix is represented by $\bm{H}$ and the optical front-end is denoted by $h_i, i\in\{0,1,2\}$.\\ 
\begin{figure}
\centering
\includegraphics[scale=1]{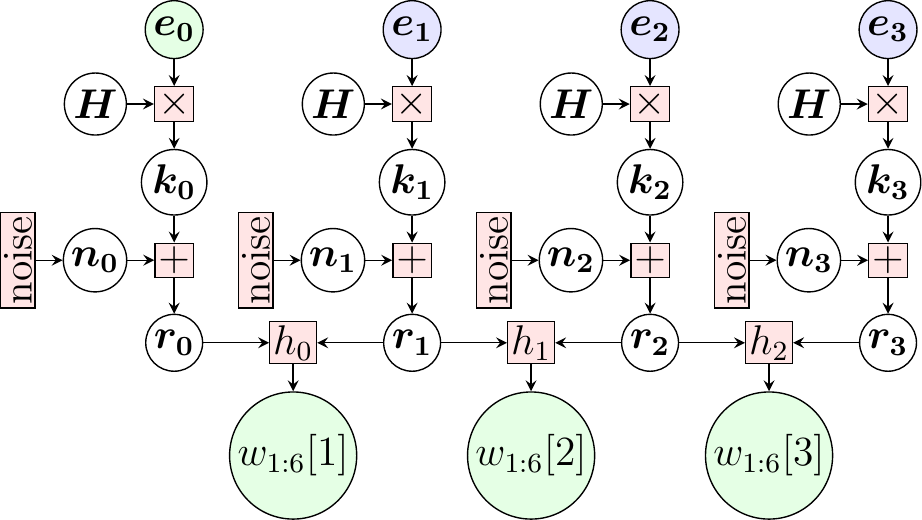}
\caption{Flow-graph of the system. $\bm{H}$ is the channel matrix and $h_i$ denotes the optical front-end operation.}
\label{fig:flowgraph}
\end{figure}

Solving {\it ML sequence-detection} (MLSD) problem requires finding
\[\underset{\bm{d}_e[1,\ldots,n]}{\arg\max}~~f(w_{1:6}[1,\ldots,n]\mid\bm{d}_e[0,\ldots,n]),\]
which, as $\mathbb{E}$\ding{214}$\mathbb{K}$, can be written as
\begin{equation}
\underset{\bm{d}_k[1,\ldots,n]}{\arg\max}~~f(w_{1:6}[1,\ldots,n]\mid\bm{d}_k[0,\ldots,n]),
\label{eq:JointOpt}
\end{equation}
where $\bm{d}_e[0]$ and $\bm{d}_k[0]$ are the pilot symbols. Note that $w_{1:6}[1,\ldots,n]$ and $\bm{d}_k[0,\ldots,n]$ form a second-order hidden Markov chain and as a result 
\[f(w_{1:6}[1,\ldots,n]\mid\bm{d}_k[1,\ldots,n])=\prod_{j=1}^{n}f(w_{1:6}[j]\mid\bm{d}_k[j-1,j]).\]
By using~(\ref{eq:4d}), (\ref{eq:JointOpt}) is equivalent to  
\begin{equation}
\underset{\bm{d}_k[1,\ldots,n]}{\arg\min}~\sum_{j=1}^{n}|k_x[j]|^2+|k_y[j]|^2-2\sigma^2\ln\left(\frac{\mathcal{I}_0\left(\frac{|\langle\bm{d}_k[j],\bm{d}_r[j]\rangle|}{\sigma^2}\right)}{\mathcal{I}_0(\lambda_y[j-1])}\right),
\label{eq:MinSumObjective}
\end{equation}
which suggests to use the min-sum algorithm on its factor graph. A factor-graph representation of the objective function of (\ref{eq:MinSumObjective}) for a sequence of length four is shown in Fig.~\ref{fig:Final_FG}. Note that the factor graph is cycle-free, hence the min-sum algorithm produces the exact minimum. In addition, it is equivalent to the Viterbi algorithm~\cite{factorgraph}. 
\paragraph*{High SNR approximation}From~(\ref{eq:i0}) and~(\ref{eq:asymlim}), (\ref{eq:MinSumObjective}) can be approximated at high SNRs as
\begin{equation*}
\underset{\bm{d}_k[1,\ldots,n]}{\arg\min}~\sum_{j=1}^{n}|k_x[j]|^2+|k_y[j]|^2-2|\langle\bm{d}_k[j],\bm{d}_r[j]\rangle|+2|k_y[j-1]|\cdot|r_y[j-1]|.
\end{equation*}
\begin{figure}
\centering
\includegraphics[scale=1]{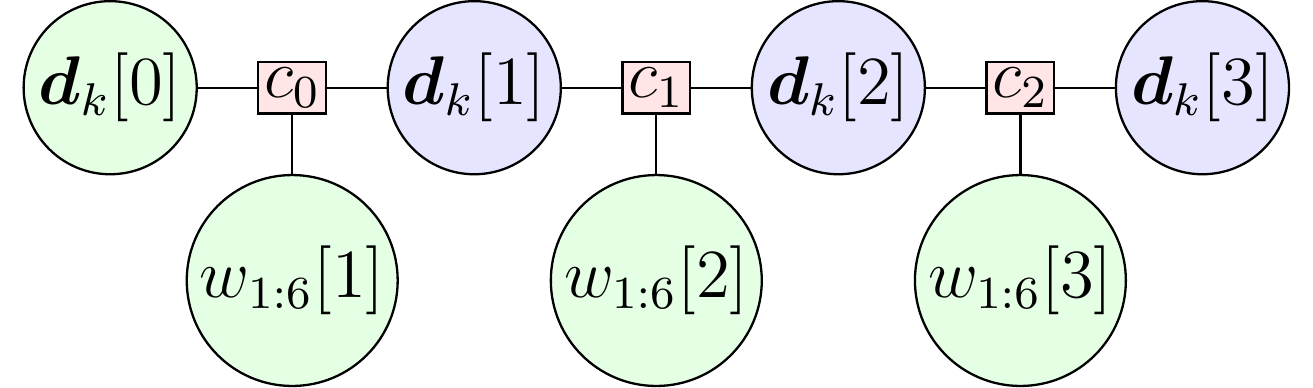}
\caption{Final factor-graph. The c-nodes, $c_i$, are cost functions.}
\label{fig:Final_FG}
\end{figure} 

\end{subsection}
\begin{subsection}{Successive ML Detection}
\label{sec:SuccessiveDetection}
In symbol-by-symbol detection, we search over all possible transmitted symbols for the detection of each received symbol; {\it e.g.}, for an $n_r$-ring/$n_p$-ary phase constellation for each polarization, $(n_rn_p)^2$ different possibilities must be examined. We can reduce this complexity by decoding in a successive manner, at the expense of an increase in SER. In the proposed successive detection method, we decode the first three dimensions jointly, then proceed to decode the fourth one with the knowledge of the first three dimensions. In this way, we must examine $(n_r^2+1)n_p$ different possibilities; which, for a large value of $n_p$, the complexity reduction is significant. For example, for an $8$-ring/$16$-ary phase constellation, we must search over $2^{14}$ possibilities to decode each symbol in the symbol-by-symbol scheme, while this number reduces to $2^{10}+16$ for successive detection.
\begin{subsubsection}{The Likelihood Function at the First Successive-Step}
The ML detection of the first three dimensions necessitates solving 
\begin{equation*}
\underset{\hat{\bm{d}}_k}{\arg\max}~~f(w_{1:4}\mid\hat{\bm{d}}_k),
\end{equation*}
which, due to the bijection between $\hat{\bm{d}}_r$ and $w_{1:4}$ (see (\ref{eq:w16relation})), can be written as
\begin{equation}
\underset{\hat{\bm{d}}_k}{\arg\max}~f(w_{1:4} \mid \hat{\bm{d}}_k)=\underset{\hat{\bm{d}}_k}{\arg\max}~f(|r_x|,|r_y|,\theta''\mid \hat{\bm{d}}_k).
\label{eq:suc}
\end{equation}
In Theorem~\ref{thrm:suc1}, we find the likelihood function of the first three dimensions. 
\begin{theorem}
\label{thrm:suc1}
\begin{equation*}
f(|r_x|,|r_y|,\theta''\mid \hat{\bm{d}}_k)=
\frac{|r_x|\cdot|r_y|}{2\pi\sigma^4}\exp\left(\frac{-\left(|\hat{\bm{d}}_r|^2+|\hat{\bm{d}}_k|^2\right)}{2\sigma^2}\right)\mathcal{I}_0\left(\frac{|\langle\hat{\bm{d}}_k,\hat{\bm{d}}_r\rangle|}{\sigma^2}\right).\end{equation*}
\end{theorem}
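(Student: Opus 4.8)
The plan is to obtain the claimed density by marginalizing the joint law of the two received components over a single nuisance phase, exactly as in the proof of Theorem~\ref{thrm:radii} but now tracking the phases. I would start from the fact, already established there, that $R_x$ and $R_y$ are independent circularly-symmetric complex Gaussians with means $k_x=|k_x|e^{i\phi_x}$ and $k_y=|k_y|e^{i\phi_y}$ and per-quadrature variance $\sigma^2$, so that their joint density is $\frac{1}{(2\pi\sigma^2)^2}\exp\big(-(|r_x-k_x|^2+|r_y-k_y|^2)/(2\sigma^2)\big)$. Writing each component in polar form $R_u=|R_u|e^{i\Psi_u}$ introduces the Jacobian factor $|r_x|\,|r_y|$ and turns the exponent into $-\big(|r_x|^2+|r_y|^2+|k_x|^2+|k_y|^2\big)/(2\sigma^2)+\frac{1}{\sigma^2}\big(|r_x||k_x|\cos(\Psi_x-\phi_x)+|r_y||k_y|\cos(\Psi_y-\phi_y)\big)$.

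Next I would change variables from the absolute phases $(\Psi_x,\Psi_y)$ to $(\theta'',\Psi_y)$, where $\theta''=\Psi_x-\Psi_y$; this map has unit Jacobian, so $\Psi_y$ becomes the single nuisance angle to be integrated out over $[0,2\pi)$. After substituting $\Psi_x=\theta''+\Psi_y$, the $\Psi_y$-dependent part of the exponent is $\sigma^{-2}$ times a sum of two sinusoids of the common variable $\Psi_y$, which by phasor addition can be rewritten as a single term $C\cos(\Psi_y-\eta)$ for one combined amplitude $C$ and phase $\eta$. The integral $\int_0^{2\pi}\exp(C\cos(\Psi_y-\eta))\,d\Psi_y$ is then exactly $2\pi\,\mathcal{I}_0(C)$ by the standard integral representation of the modified Bessel function, and collecting the prefactors via $\frac{2\pi}{(2\pi\sigma^2)^2}=\frac{1}{2\pi\sigma^4}$ reproduces the constant in the statement.

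The crux of the argument is to show that the combined amplitude is exactly $C=|\langle\hat{\bm{d}}_k,\hat{\bm{d}}_r\rangle|/\sigma^2$. Writing the two sinusoids inside the exponent as the real part of $\big(|r_x||k_x|e^{i(\theta''-\phi_x)}+|r_y||k_y|e^{-i\phi_y}\big)e^{i\Psi_y}$, the amplitude $C$ is $\sigma^{-2}$ times the magnitude of the bracketed phasor; multiplying inside by the unit factor $e^{i\phi_y}$ and invoking the defining relation $\theta'=\arg(k_xk_y^\ast)=\phi_x-\phi_y$ collapses the absolute mean phases and leaves $C=\frac{1}{\sigma^2}\big||r_x||k_x|e^{i(\theta''-\theta')}+|r_y||k_y|\big|$. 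I would then check directly from $\hat{\bm{d}}_k=[|k_x|,|k_y|e^{i\theta'}]^t$ and $\hat{\bm{d}}_r=[|r_x|,|r_y|e^{i\theta''}]^t$ that $\langle\hat{\bm{d}}_k,\hat{\bm{d}}_r\rangle=|k_x||r_x|+|k_y||r_y|e^{i(\theta'-\theta'')}$, whose magnitude equals $\sigma^2C$, and that $|\hat{\bm{d}}_k|^2=|k_x|^2+|k_y|^2$ and $|\hat{\bm{d}}_r|^2=|r_x|^2+|r_y|^2$ match the Gaussian exponent. The main obstacle is precisely this bookkeeping: verifying that $\phi_x$ and $\phi_y$ enter only through their difference $\theta'$, so that the density legitimately conditions on $\hat{\bm{d}}_k$ alone. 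This is the same non-coherent cancellation that underlies Theorem~\ref{thrm:phase} and its proof in Appendix~\ref{app0}, here realized in the simpler two-component setting.
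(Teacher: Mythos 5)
Your proof is correct, but it is organized differently from the paper's. The paper factors the target density by the chain rule, $f(|r_x|,|r_y|,\theta''\mid\hat{\bm{d}}_k)=f(|r_x|,|r_y|\mid\hat{\bm{d}}_k)\, f(\theta''\mid \hat{\bm{d}}_k,|r_x|,|r_y|)$, identifies the first factor as a product of two Rician densities via Theorem~\ref{thrm:radii}, and evaluates the second as the density of the difference of two independent von Mises phases, that is, as a circular convolution integral; the Bessel factors $\mathcal{I}_0(\lambda_x)\mathcal{I}_0(\lambda_y)$ produced by the two pieces then cancel upon multiplication. You never introduce the Rician or von Mises laws: you write the joint complex-Gaussian density in polar coordinates, make the unit-Jacobian change of variables $(\psi_x,\psi_y)\mapsto(\theta'',\psi_y)$, and integrate out the single nuisance phase. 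The computational heart is the same in both routes: phasor addition collapses the two cosines into a single one of amplitude $|\langle\hat{\bm{d}}_k,\hat{\bm{d}}_r\rangle|/\sigma^2$ (your multiplication by $e^{i\phi_y}$ together with $\theta'=\phi_x-\phi_y$ plays the role of the paper's appeal to Fig.~\ref{fig:triangles}), and $\int_0^{2\pi}e^{C\cos\omega}\,\dd\omega=2\pi\mathcal{I}_0(C)$ yields the Bessel function; your prefactor bookkeeping $\frac{2\pi}{(2\pi\sigma^2)^2}=\frac{1}{2\pi\sigma^4}$ and the identifications $|\hat{\bm{d}}_k|^2=|k_x|^2+|k_y|^2$, $|\hat{\bm{d}}_r|^2=|r_x|^2+|r_y|^2$ are all correct. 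Your route buys self-containedness---no external distributional facts, no cancellation of intermediate $\mathcal{I}_0$ factors---and it makes explicit a point the paper leaves tacit: that the conditional law given $(k_x,k_y)$ depends on the absolute phases only through their difference $\theta'$, which is what licenses conditioning on $\hat{\bm{d}}_k$ alone. The paper's route buys modularity and brevity: it reuses Theorem~\ref{thrm:radii} verbatim and mirrors the structure of the proof of Theorem~\ref{thrm:phase} (the same convolution-and-phasor argument with one extra von Mises term), keeping all of the likelihood derivations in a single template.
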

\begin{proof}
See Appendix~\ref{app1}.
\end{proof}
By using Theorem~\ref{thrm:suc1}, we can rewrite (\ref{eq:suc}) as 
\begin{equation}
\underset{\hat{\bm{d}}_k}{\arg\min}~|\hat{\bm{d}}_k|^2-2\sigma^2\ln\left(\mathcal{I}_0\left(\frac{|\langle\hat{\bm{d}}_k,\hat{\bm{d}}_r\rangle|}{\sigma^2}\right)\right).
\label{eq:suc1cost}
\end{equation}
Note that the optimal $\theta'$, obtained by solving (\ref{eq:suc1cost}), is the closest possible one to $\theta''$, {\it i.e.,} it maximizes $\cos(\theta'-\theta'')$ over all feasible $\theta'$. That is because $\mathcal{I}_0(\cdot)$ and $\ln(\cdot)$ are strictly increasing functions. Hence, to minimize the objective function of~(\ref{eq:suc1cost}), we must maximize $|\langle\hat{\bm{d}}_k,\hat{\bm{d}}_r\rangle|$, which from Fig.~\ref{fig:triangles} obtains when $\theta'-\theta''$ is the ``closest'' one to zero, {\it i.e.}, $\cos(\theta'-\theta'')$ must be maximized. 
\paragraph*{High SNR approximation} Similar to the symbol-by-symbol detection, at high SNRs, (\ref{eq:suc1cost}) can be approximated as 
\begin{equation*}
\underset{\hat{\bm{d}}_k}{\arg\min}~|\hat{\bm{d}}_k|^2-2|\langle\hat{\bm{d}}_k,\hat{\bm{d}}_r\rangle|.
\end{equation*}
\end{subsubsection}
\begin{subsubsection}{The Second Successive-Step}
\label{subsec:secondSuccessive}
At the second successive-step, the fourth dimension is decoded. According to (\ref{eq:w16relation}) we have
\begin{align*}
w_5&=2\Re\left(r_x\cdot\D r_y^\ast\right)=2|r_x|\cdot|\D r_y|\cdot\cos(\gamma''),\\
w_6&=2\Im\left(r_x\cdot\D r_y^\ast\right)=2|r_x|\cdot|\D r_y|\cdot\sin(\gamma'').
\end{align*}
At this step, the intensities are treated as constants, as they have been decoded at the first successive-step. As a result, there is a bijection between $w_{5:6}$ and $\gamma''$. The decoder performs ML detection of $\gamma'$ by solving
\begin{align}
&\underset{\gamma'}{\arg\max}~f(w_{5:6}\mid\bm{d}_k,\hat{\bm{d}}_r,|\D r_y|)=\nonumber\\
&\underset{\gamma'}{\arg\max}~f(\gamma''\mid\bm{d}_k,\hat{\bm{d}}_r,|\D r_y|).
\label{eq:suc4}
\end{align}
Theorem~\ref{thrm:fourth} provides an easy way to decide on $\gamma'$.
\begin{theorem}
\label{thrm:fourth}
\begin{equation}
\underset{\gamma'}{\arg\max}~f(\gamma''\mid\bm{d}_k,\hat{\bm{d}}_r,|\D r_y|) = \underset{\gamma'}{\arg\max}~\cos(\gamma'-\gamma''-\alpha),
\label{eq:final4D}
\end{equation} 
where $\alpha=\arg\left(\langle\hat{\bm{d}}_k,\hat{\bm{d}}_r\rangle\right)$, as shown in Fig.~\ref{fig:triangles}.
\end{theorem}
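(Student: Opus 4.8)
The plan is to obtain the conditional density $f(\gamma''\mid\bm{d}_k,\hat{\bm{d}}_r,|\D r_y|)$ as a ratio of two densities already established in the excerpt, and then read off the maximizer. Since conditioning on $(\bm{d}_k,\hat{\bm{d}}_r,|\D r_y|)$ is the same as conditioning on $(\bm{d}_k,|r_x|,|r_y|,\theta'',|\D r_y|)$, Bayes' rule gives
\begin{equation*}
f(\gamma''\mid\bm{d}_k,\hat{\bm{d}}_r,|\D r_y|)=\frac{f(|r_x|,|r_y|,\theta'',\gamma''\mid\bm{d}_k,|\D r_y|)}{f(|r_x|,|r_y|,\theta''\mid\bm{d}_k,|\D r_y|)}.
\end{equation*}
The numerator is exactly (\ref{eq:4d}). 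For the denominator I would argue that the first three dimensions $(|r_x|,|r_y|,\theta'')$ depend on the transmitted data only through the current Jones components $(k_x,k_y)$ and, by the rotational invariance of the circularly symmetric noise, are insensitive to a common phase rotation of $(k_x,k_y)$; hence their joint density depends only on $\hat{\bm{d}}_k=[|k_x|,|k_y|e^{i\theta'}]^t$ and equals the expression in Theorem~\ref{thrm:suc1}. Justifying this denominator identification (that the marginal conditioned on the full $\bm{d}_k,|\D r_y|$ collapses to the one conditioned on $\hat{\bm{d}}_k$) is the main thing to get right; the remainder is algebra and monotonicity.

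Next I would substitute both expressions and simplify. Because $|\hat{\bm{d}}_r|^2=|r_x|^2+|r_y|^2$ and $|\hat{\bm{d}}_k|^2=|k_x|^2+|k_y|^2$, the exponential factors cancel exactly, as do the $|r_x|\cdot|r_y|$ prefactors (leaving the constant $\tfrac{1}{2\pi}$), yielding
\begin{equation*}
f(\gamma''\mid\bm{d}_k,\hat{\bm{d}}_r,|\D r_y|)=\frac{1}{2\pi}\cdot\frac{\mathcal{I}_0\!\left(\frac{|\langle\bm{d}_k,\bm{d}_r\rangle|}{\sigma^2}\right)}{\mathcal{I}_0(\D\lambda_y)\,\mathcal{I}_0\!\left(\frac{|\langle\hat{\bm{d}}_k,\hat{\bm{d}}_r\rangle|}{\sigma^2}\right)}.
\end{equation*}
Here neither $\mathcal{I}_0(\D\lambda_y)$ nor $\mathcal{I}_0(|\langle\hat{\bm{d}}_k,\hat{\bm{d}}_r\rangle|/\sigma^2)$ involves $\gamma'$, since $\hat{\bm{d}}_k$ and $\D\lambda_y$ are built only from the first three dimensions and the (previously decoded) $|\D k_y|$. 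Thus maximizing over $\gamma'$ reduces to maximizing $\mathcal{I}_0(|\langle\bm{d}_k,\bm{d}_r\rangle|/\sigma^2)$, and since $\mathcal{I}_0$ is strictly increasing on $\mathbb{R}^+$, this is the same as maximizing $|\langle\bm{d}_k,\bm{d}_r\rangle|$.

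Finally I would carry out the vector-addition step, which is exactly the triangle depicted in Fig.~\ref{fig:triangles}. Separating the third components and writing $\langle\hat{\bm{d}}_k,\hat{\bm{d}}_r\rangle=|\langle\hat{\bm{d}}_k,\hat{\bm{d}}_r\rangle|e^{i\alpha}$,
\begin{equation*}
\langle\bm{d}_k,\bm{d}_r\rangle=\langle\hat{\bm{d}}_k,\hat{\bm{d}}_r\rangle+|\D k_y|\cdot|\D r_y|\,e^{i(\gamma'-\gamma'')},
\end{equation*}
so that
\begin{equation*}
|\langle\bm{d}_k,\bm{d}_r\rangle|^2=|\langle\hat{\bm{d}}_k,\hat{\bm{d}}_r\rangle|^2+|\D k_y|^2|\D r_y|^2+2|\langle\hat{\bm{d}}_k,\hat{\bm{d}}_r\rangle|\cdot|\D k_y|\cdot|\D r_y|\cos(\gamma'-\gamma''-\alpha).
\end{equation*}
Every term except the cosine is constant in $\gamma'$, and the coefficient $2|\langle\hat{\bm{d}}_k,\hat{\bm{d}}_r\rangle|\cdot|\D k_y|\cdot|\D r_y|$ is nonnegative, so $|\langle\bm{d}_k,\bm{d}_r\rangle|$ is maximized precisely when $\cos(\gamma'-\gamma''-\alpha)$ is maximized, which establishes (\ref{eq:final4D}).
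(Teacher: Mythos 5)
Your proposal is correct and follows essentially the same route as the paper's proof: Bayes' rule with the numerator given by (\ref{eq:4d}), the key identification of the denominator with the expression in Theorem~\ref{thrm:suc1} (the paper justifies this by the conditional independence of $\hat{\bm{d}}_r$ from $(\gamma',|\D r_y|,|\D k_y|)$ given $\hat{\bm{d}}_k$, which is the same fact you derive from rotational invariance), cancellation of $\gamma'$-independent factors, monotonicity of $\mathcal{I}_0$, and the reduction of maximizing $|\langle\bm{d}_k,\bm{d}_r\rangle|$ to maximizing $\cos(\gamma'-\gamma''-\alpha)$. Your only departure is cosmetic: you carry out the final vector-addition step explicitly via $|\langle\bm{d}_k,\bm{d}_r\rangle|^2=|\langle\hat{\bm{d}}_k,\hat{\bm{d}}_r\rangle|^2+|\D k_y|^2|\D r_y|^2+2|\langle\hat{\bm{d}}_k,\hat{\bm{d}}_r\rangle|\cdot|\D k_y|\cdot|\D r_y|\cos(\gamma'-\gamma''-\alpha)$, whereas the paper asserts this geometrically via Fig.~\ref{fig:triangles}.
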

\begin{proof}
See Appendix~\ref{app2}.
\end{proof}
The interpretation of (\ref{eq:final4D}) is that the decoder chooses the closest feasible $\gamma'$ to $\gamma''+\alpha$. This can be justified by using Fig.~\ref{fig:triangles} and (\ref{eq:actualopt3}) as well. In the first successive step we have decoded $\hat{\bm{d}}_k$ and as a result, $(\lambda_x,\lambda_y,\theta'-\theta'')$ are fixed. From (\ref{eq:actualopt3}), we must maximize $|\langle\bm{d}_k,\bm{d}_r\rangle|$ to minimize its objective function, which happens when the segment $\langle\bm{d}_k,\bm{d}_r\rangle$ in Fig.~\ref{fig:triangles} has the smallest angular deviation from the segment $\langle\hat{\bm{d}}_k,\hat{\bm{d}}_r\rangle$. This means that $\gamma'-\gamma''$ must be the closest one to $\alpha$.    
\end{subsubsection}

\end{subsection}
\end{section}
\begin{section}{Subchannel Fading}
\label{sec:fading}
In this section, we show that the optical front-end, studied in this paper, causes the fourth DOF ($\gamma=\arg(e_x\cdot\D e_y^\ast)$) to be subjected to fast fading, which makes this subchannel exhibit a symbol error rate behaviour that is markedly different than the other subchannels. For the purpose of this discussion, we neglect the effect of noise (setting noise to zero); instead, we focus on the effect that the channel matrix, $\bm{H}$, has on the four DOF subchannels.

As we see from (\ref{eq:noisefree3D}), the relationship between the input and the output of the first three DOF subchannels is determined by $\bm{H}$, and does not depend on previously transmitted symbols. As a result, the subchannels of the first three dimensions change only block-by-block; hence, we expect the first three dimensions to experience a slow (block) fading channel. 

The output of the fourth subchannel, however, is not only a function of the channel parameters, but also a function of the previously transmitted symbols as well. Particularly, from~(\ref{eq:noisefree4D3}), we have
\begin{align*}
e^{i\gamma'}=\frac{c}{|k_x|\cdot|\D k_y|}e^{i\gamma},
\end{align*}
where, the complex number $c$ is
\begin{align*}
\begin{multlined} 
c= a^2|e_x|\cdot|\D e_y|-b^2|e_y|\cdot|\D e_x|e^{-i\left(\theta+\D\theta\right)}\\
-ab|e_x|\cdot|\D e_x|e^{-i\D\theta}+ab|e_y|\cdot|\D e_y|e^{-i\theta}.
\end{multlined}
\end{align*}
The coefficient $c$ is a function of $(|\D e_x|,|\D e_y|,\D\theta)$, which makes the fourth DOF subchannel vary symbol-by-symbol. As a result, we see that the fourth DOF subchannel suffers from fast fading; and similar to a Rayleigh fading channel, we expect the symbol error rate of the fourth dimension to be proportional to $\frac{1}{\text{SNR}}$~\cite[p. 533]{wozencraft}. In Sec.~\ref{sec:Results}, we see that the SER figures support this claim.  

\end{section}
\begin{section}{Numerical Results}
\label{sec:Results}
In this section, we provide some numerical results to compare the discussed detection methods. For all figures, the SNR is defined as the average transmitted energy per polarization over the complex-noise variance per polarization. Specifically, for the discussed constellation, the SNR is defined as
\[\text{SNR}=\frac{r_1^2(1+\delta^2\left(\frac{n_r-1}{2}\right))}{2\sigma^2}.\]

The resulting SERs for different constellations are shown in Figs.~\ref{fig:SER24_114}--\ref{fig:SER816_545}. The channel matrix for each block of data is chosen uniformly over all possible $\bm{H}$ matrices (see Sec.~\ref{subsec:Channel}). By increasing $\delta^2$, the rings become more distant from each other, hence it improves the performance of the first two dimensions in terms of SER, but there is a trade-off with the performance of the phase channels. For example, for $2$-ring/$4$-ary phase constellation and the target SER of $10^{-3}$, changing $\delta^2$ from $1$ to $4.83$ results an improve of $7$~dB in the intensity channels, while it degrades the performance of the third channel by $3.5$~dB (compare Figs.~\ref{fig:SER24_114} and~\ref{fig:SER24_214}). As another example, for $8$-ring/$8$-ary phase constellation and the same target SER, by changing $\delta^2$ from $2.12$ to $15.36$, the performance of the first two dimensions improve by $1.5$~dB, while the performance of the third and the fourth dimensions degrades by $8$~dB and $2.5$~dB, respectively (compare Figs.~\ref{fig:SER88_214}--\ref{fig:SER88_1042}).

As shown, while the complexity of successive detection is smaller than other methods, its SER does not differ noticeably. For example, by using $8$-ring/$8$-ary phase constellation with successive detection, the complexity of brute-force search is reduced (approximately) by a factor of 8, while for the target SER of $10^{-3}$, it degrades the performance of the third channel less than $0.5$~dB for $\delta^2=0.69$, and does not affect the performance of other dimensions noticeably (see Fig.~\ref{fig:SER88_214}).

From Figs.~\ref{fig:SER24_114}--\ref{fig:SER88_1042}, we see that the high-SNR approximation is a very good approximation which, by avoiding computing the modified Bessel function, can reduce the complexity of decoder. In all of these figures, the approximated figures and the actual ones are almost superposed. As a result, due to the large size of the constellation and to remedy the long running time of the simulation, the SER figures for $8$-ring/$16$-ary phase constellation are computed by high-SNR approximated formula.

As discussed in Sec.~\ref{sec:fading}, we expect the fourth dimension to be under fast fading. The results support our expectation. Similar to a Rayleigh fading channel, the SER of the fourth dimension is proportional to $\frac{1}{\text{SNR}}$, and that is the reason of the linear behaviour of the fourth dimension symbol error rate, shown in Figs.~\ref{fig:SER24_114}--\ref{fig:SER816_545}. The fast fading behaviour is due to the non-zero $b$ entry in the $\bm{H}$ matrix, which entangles the fourth channel with the past data. Hence, we expect the fourth dimension to behave as same as the third dimension when $b=0$. Fig.~\ref{fig:SER_norotate} shows that this is indeed true. For this figure, the channel matrix varies block-by-block, but in all cases, its $b$ entry is zero. As $|a|^2+|b|^2=1$, the no-entanglement condition implies that $a = e^{i\zeta}$, for some random $\zeta\in[-\pi,\pi)$. 

The achievable rate for different constellations and $\delta^2$ are shown in Figs.~\ref{fig:rate24}--\ref{fig:rate88}. The rates are actually
\begin{equation}I(|K_x|,|K_y|,\Theta',\Gamma';|R_x|,|R_y|,\Theta'',\Gamma''\mid |\D K_y|,|\D R_y|),\label{eq:rate}\end{equation}
where $I(U;V)$ denotes the mutual information between the random variables $U$ and $V$,
and is computed by the Monte Carlo method. As there is a conditioning on $|\D R_y|$,~(\ref{eq:rate}) is actually the achievable rate of the scheme, where the receiver feeds back the intensity of the received Y polarization. As we are using an $n_r$-ring/$n_p$-ary phase constellation, the maximum rate is $2\log(n_rn_p)$ bits per channel use.

The $\delta^2$ which causes the minimum Euclidean distance between two points on a ring (which happens for the inner-most ring) to be the same as the minimum Euclidean distance between two points on different rings (which happens for the two outer-most rings) is denoted by $\delta^2_{\text{bl}}$ (balanced $\delta^2$). It can be easily shown that 
\begin{equation}
\begin{multlined}
\delta^2_{\text{bl}}=4\sin^2\left(\frac{\pi}{n_p}\right)(2n_r-3)\\+4\sin\left(\frac{\pi}{n_p}\right)\sqrt{4(n_r-1)(n_r-2)\sin^2\left(\frac{\pi}{n_p}\right)+1}.
\end{multlined}
\label{eq:optimal_rho}
\end{equation}
As it is shown in Figs.~\ref{fig:rate24}--\ref{fig:rate88}, although $\delta^2_{\text{bl}}$ is not the optimal $\delta^2$, it is nearly optimal. The $\delta^2_{\text{bl}}$ for different constellation are shown in Table~\ref{tab:rho}. Inspired by~\cite{ungerboeck1982channel}, in Fig.~\ref{fig:rate} we have compared the achievable rate of different constellations at their $\delta^2_{\text{bl}}$. This figure shows the necessity of using an error-correcting code at the encoder. For example, by using a 4-ring/8-ary phase constellation without any error-correcting code, we can transmit $10$ bits per channel use at the SNR of $25$~dB, while by using an 8-ring/8-ary phase constellation and a code of rate $\frac{5}{6}$, we can achieve the same rate at the SNR of $20$~dB; hence, we can save $5$~dB. 
\begin{table}
\caption{The $\delta^2_{\text{bl}}$ for different constellation, computed by (\ref{eq:optimal_rho}).}
\centering
\begin{tabular}{|c|c|c|c|c|c|}
\hline
$(n_r,n_p)$ & $\delta^2_{\text{bl}}$ & $(n_r,n_p)$ &$\delta^2_{\text{bl}}$ & $(n_r,n_p)$ & $\delta^2_{\text{bl}}$\\
\hline
$(2,4)$ & $4.83$ & $(4,4)$ & $20.20$& $(4,8)$ & $6.18$\\
\hline
$(8,4)$ & $52.08$& $(8,8)$ & $15.36$ & $(8,16)$ & $4.10$\\
\hline
\end{tabular}
\label{tab:rho}
\end{table}
\begin{figure}
\centering
\includegraphics[scale=1]{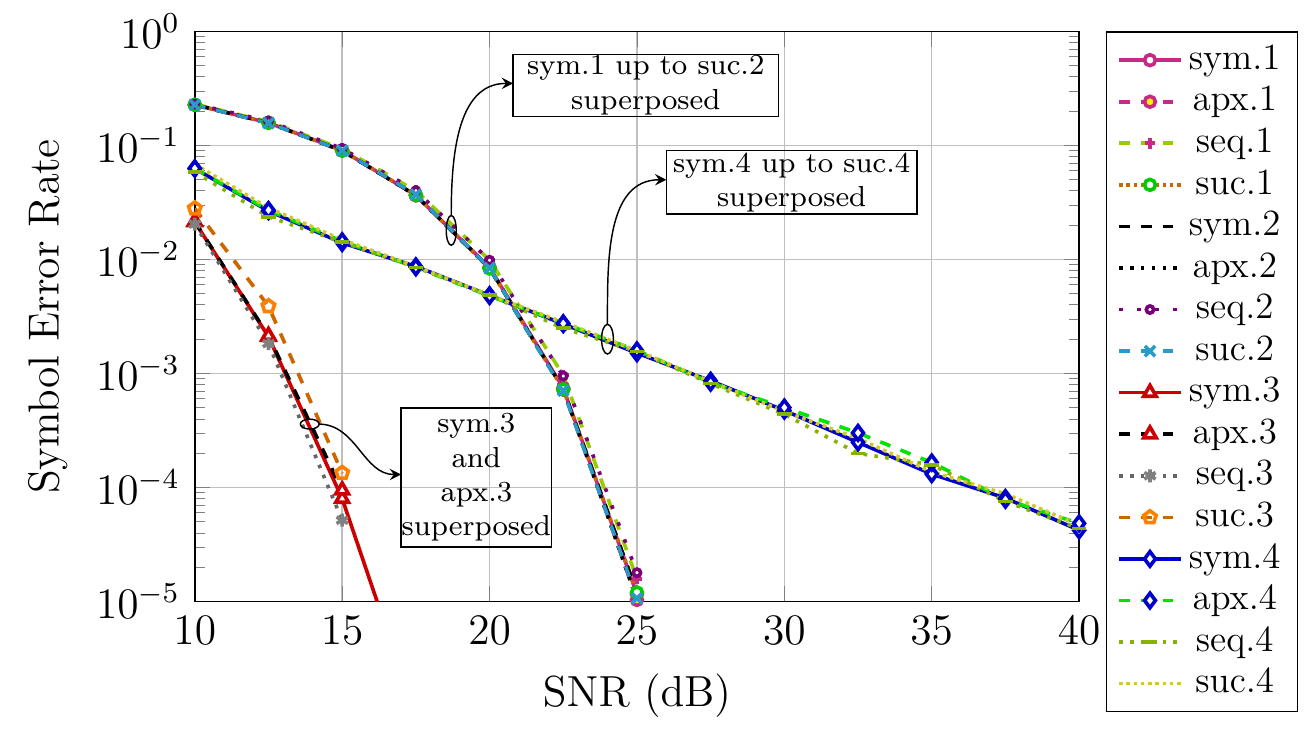}
\caption{Symbol-by-symbol detection (sym), its high-SNR approximation (apx), and successive detection (suc) of four dimensions for 2-ring/4-ary phase constellation and $\delta^2=1$.}
\label{fig:SER24_114}
\end{figure}
\begin{figure}
\centering
\includegraphics[scale=1]{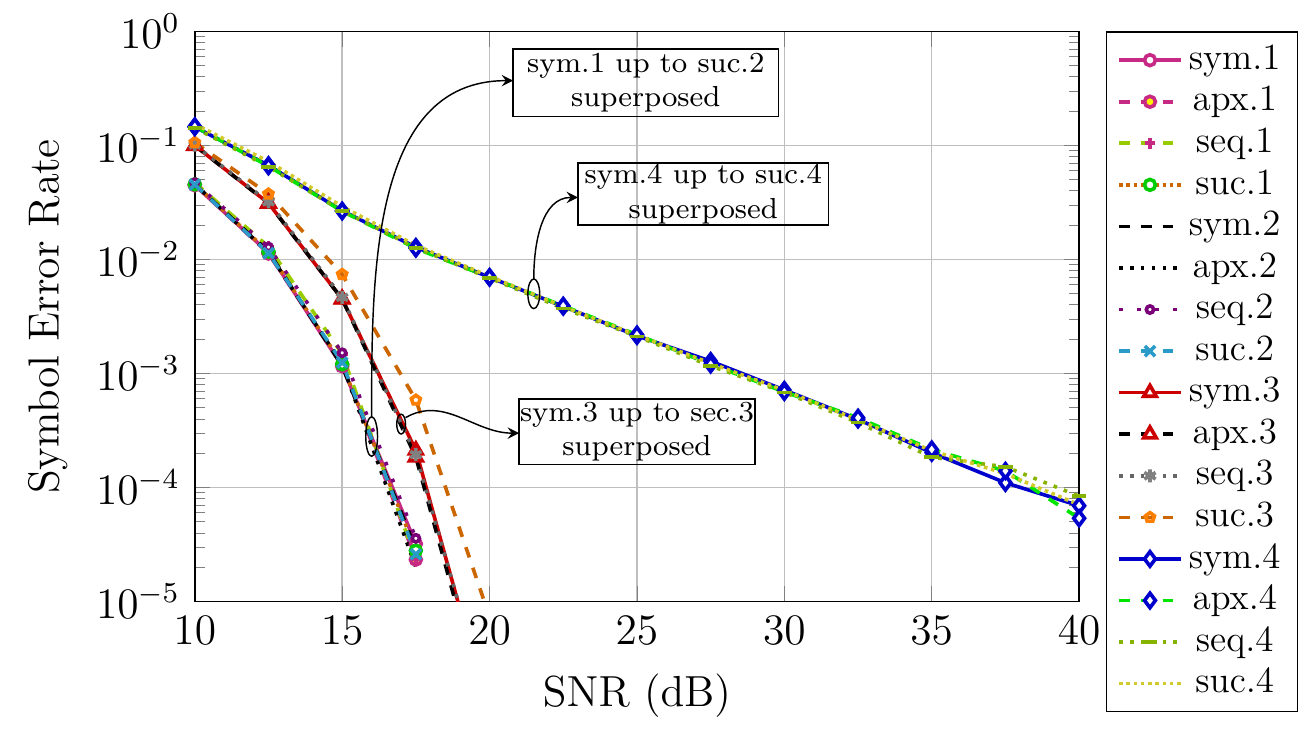}
\caption{Symbol-by-symbol detection (sym), its high-SNR approximation (apx), sequence detection (seq), and successive detection (suc) of four dimensions for 2-ring/4-ary phase constellation and $\delta^2=2(1+\sqrt{2})\simeq 4.83$.}
\label{fig:SER24_214}
\end{figure}
\begin{figure}
\centering
\includegraphics[scale=1]{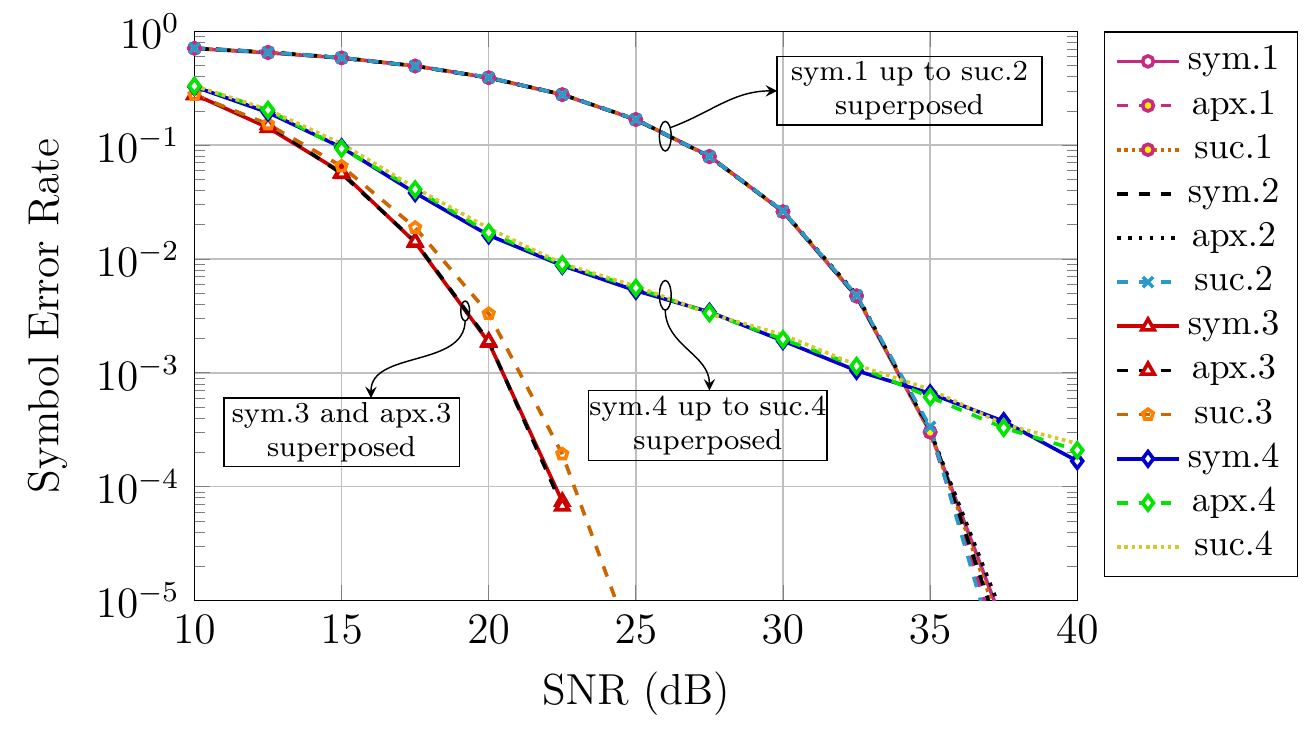}
\caption{Symbol-by-symbol detection (sym), its high-SNR approximation (apx), and successive detection (suc) of four dimensions for 8-ring/8-ary phase constellation and $\delta^2=0.69$.}
\label{fig:SER88_214}
\end{figure}
\begin{figure}
\centering
\includegraphics[scale=1]{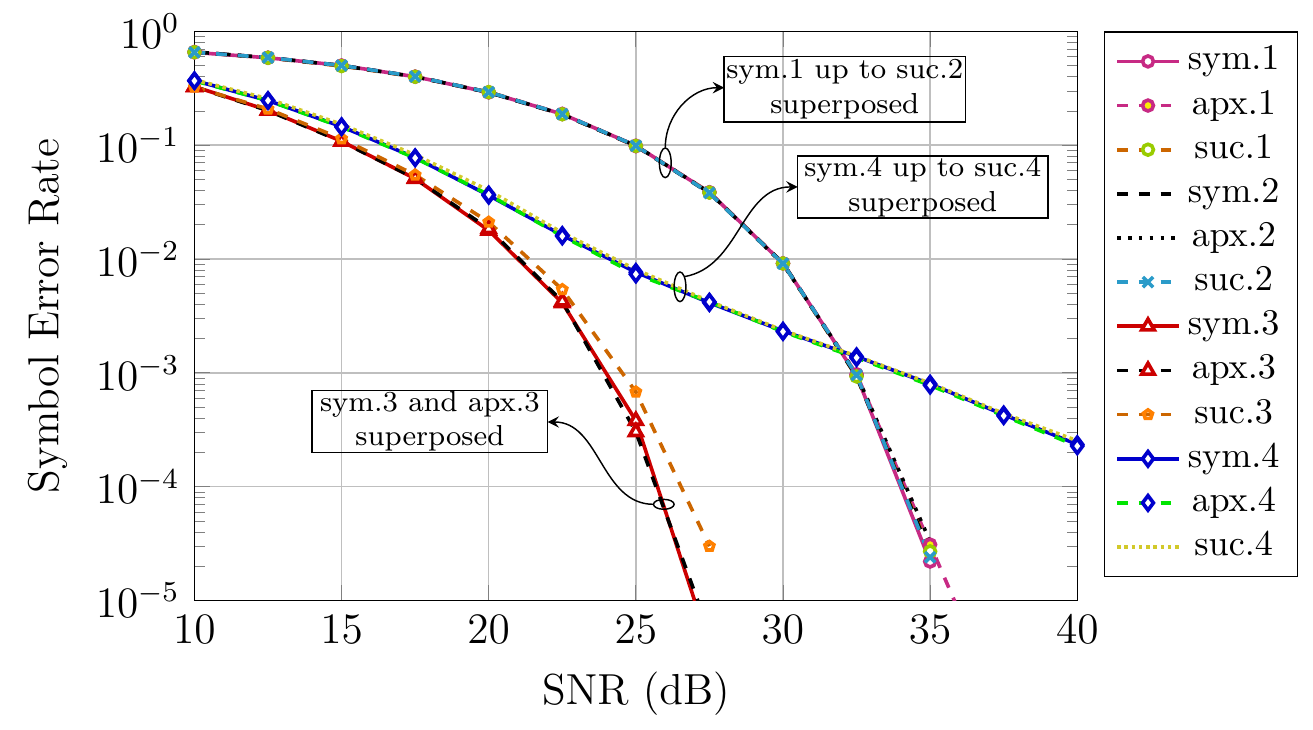}
\caption{Symbol-by-symbol detection (sym), its high-SNR approximation (apx), and successive detection (suc) of four dimensions for 8-ring/8-ary phase constellation and $\delta^2=2.12$.}
\label{fig:SER88_398}
\end{figure}
\begin{figure}
\centering
\includegraphics[scale=1]{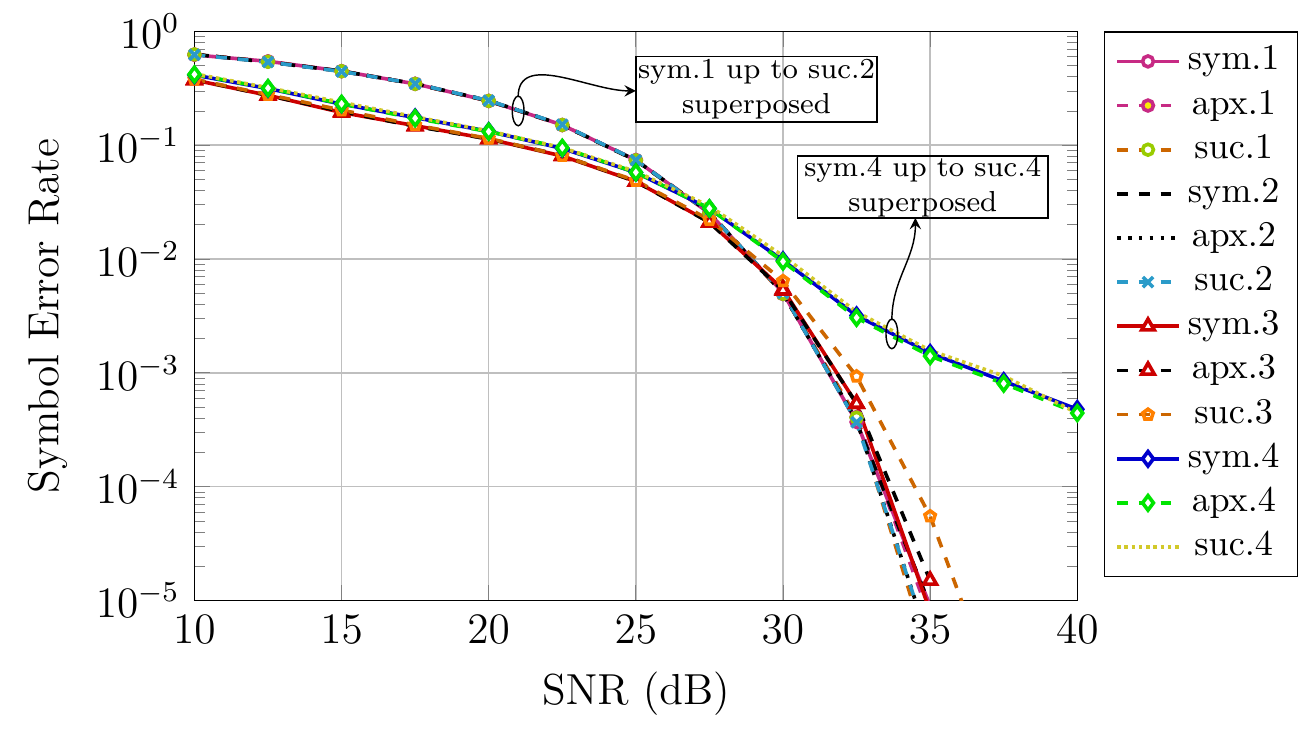}
\caption{Symbol-by-symbol detection (sym), its high-SNR approximation (apx), and successive detection (suc) of four dimensions for 8-ring/8-ary phase constellation and $\delta^2=15.36$.}
\label{fig:SER88_1042}
\end{figure}
\begin{figure}
\centering
\includegraphics[scale=1]{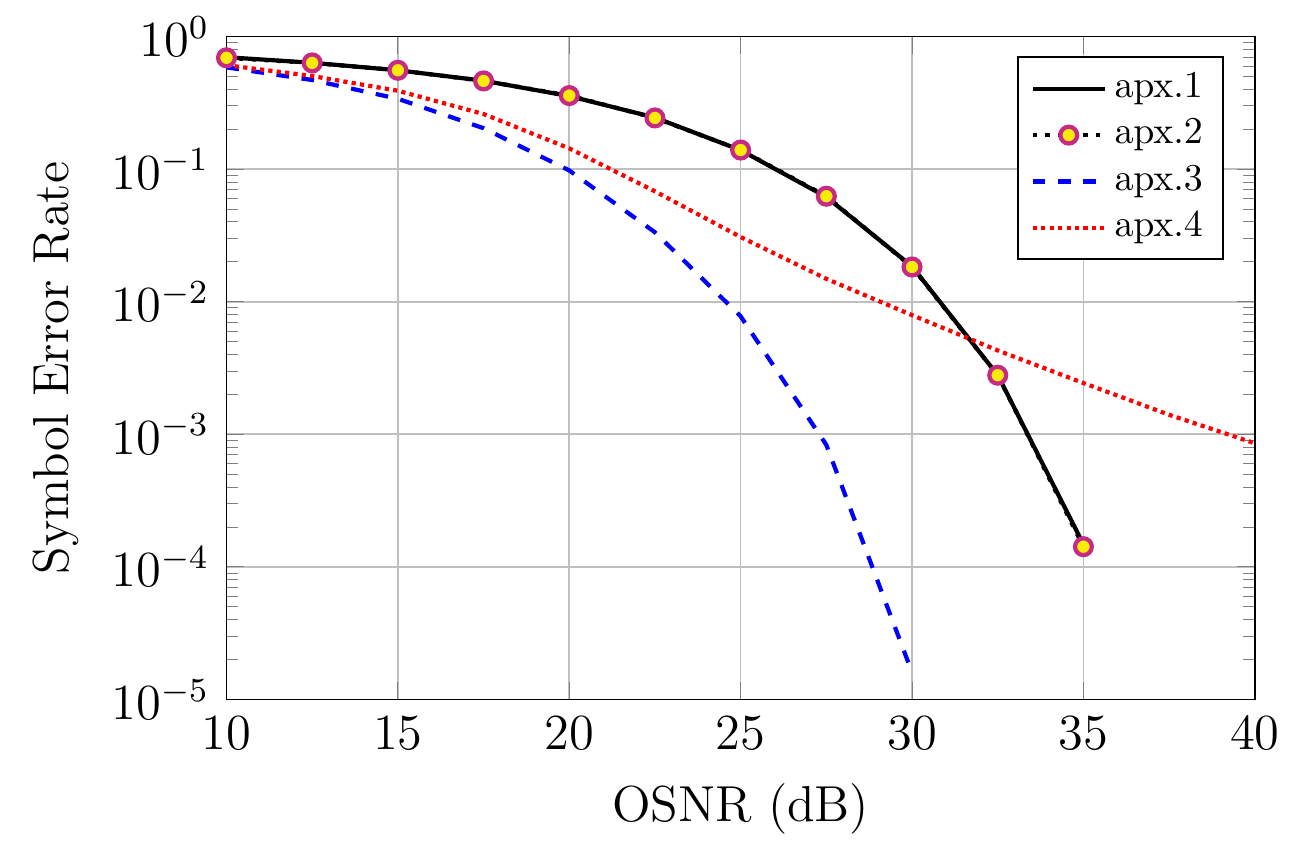}
\caption{High-SNR approximation of symbol-by-symbol detection of four dimensions for 8-ring/16-ary phase constellation and $\delta^2=0.93$.}
\label{fig:SER816_75}
\end{figure}
\begin{figure}
\centering
\includegraphics[scale=1]{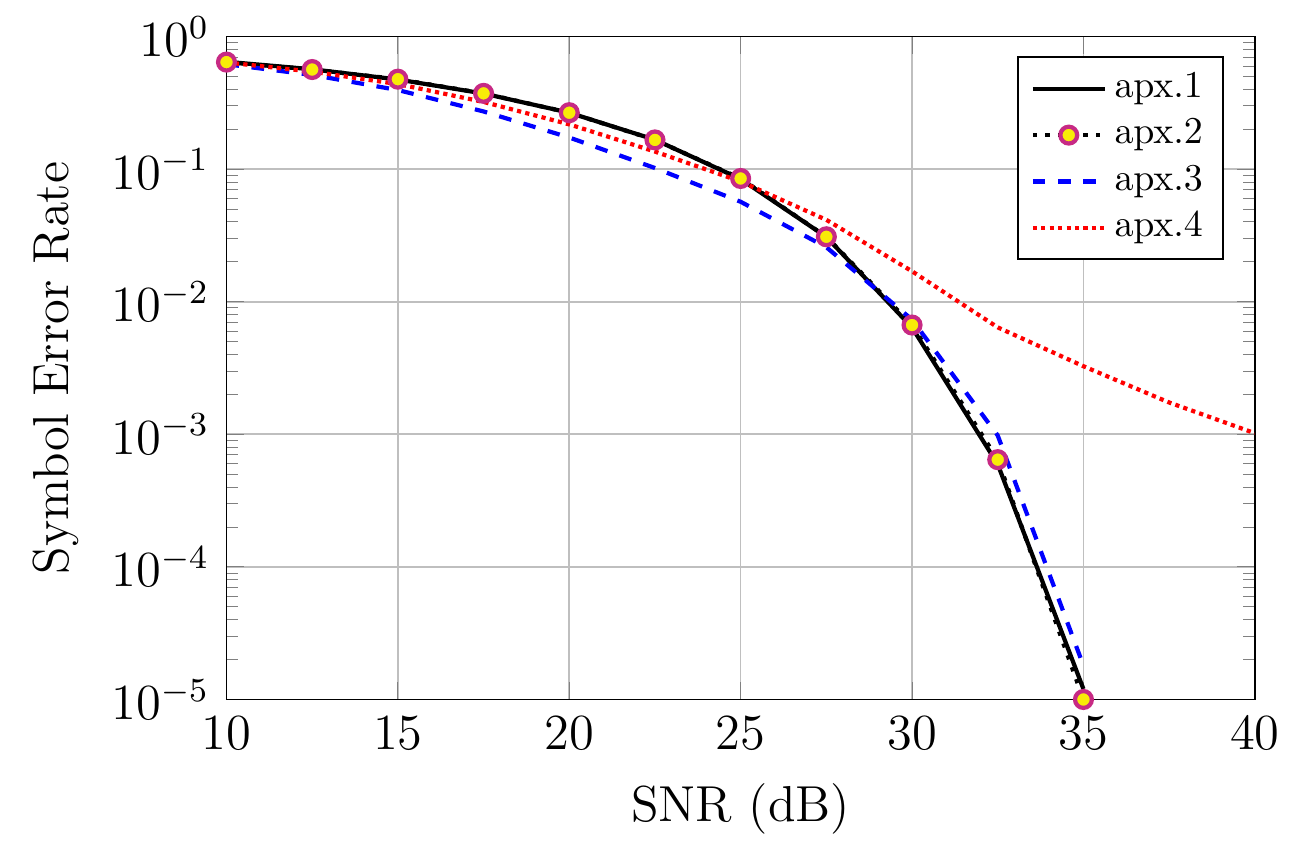}
\caption{High-SNR approximation of symbol-by-symbol detection of four dimensions for 8-ring/16-ary phase constellation and $\delta^2=4.10$.}
\label{fig:SER816_545}
\end{figure}
\begin{figure}
\centering
\includegraphics[scale=1]{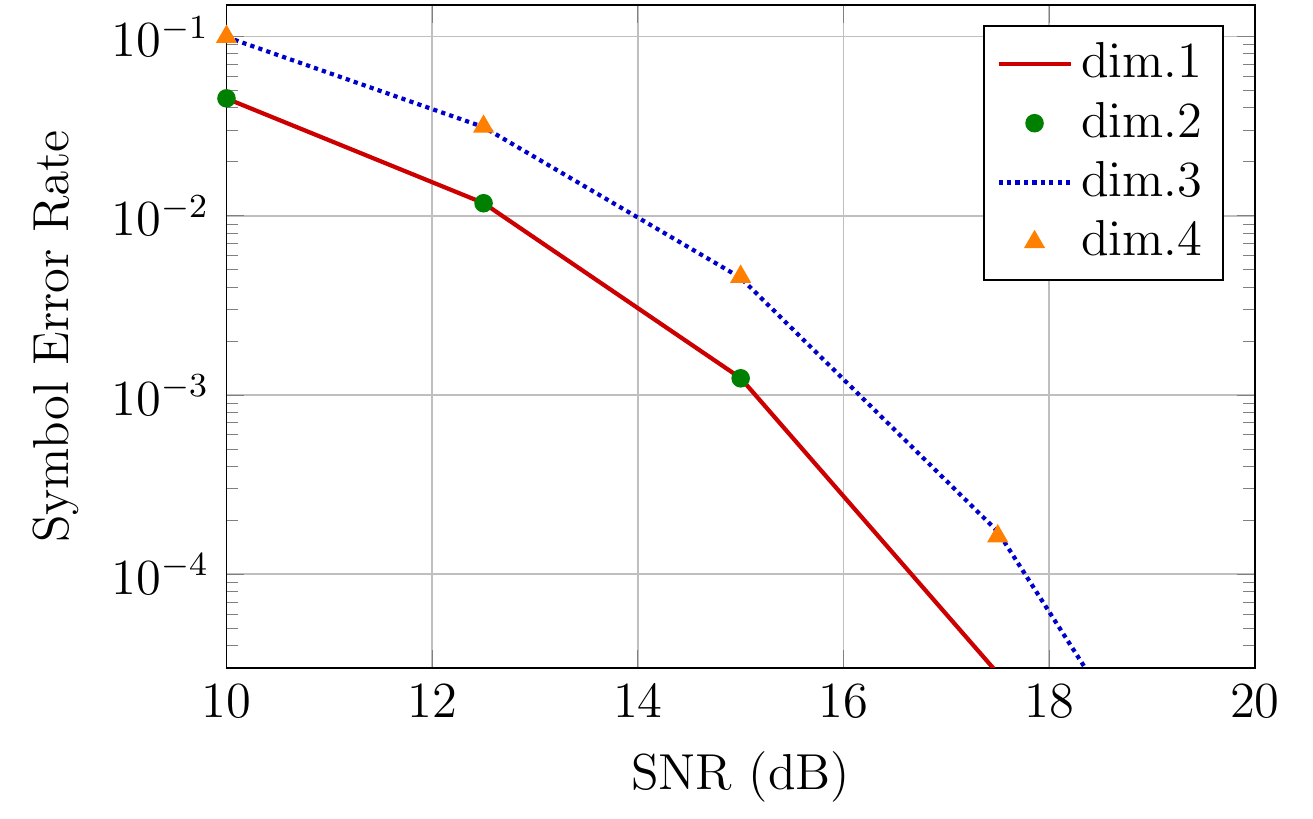}
\caption{Symbol-by-symbol detection of 2-ring/4-ary phase constellation with $\delta^2=4.10$, when the $b$ entry of $\bm{H}$ is zero (no entanglement).}
\label{fig:SER_norotate}
\end{figure}
\begin{figure}
\centering
\includegraphics[scale=1]{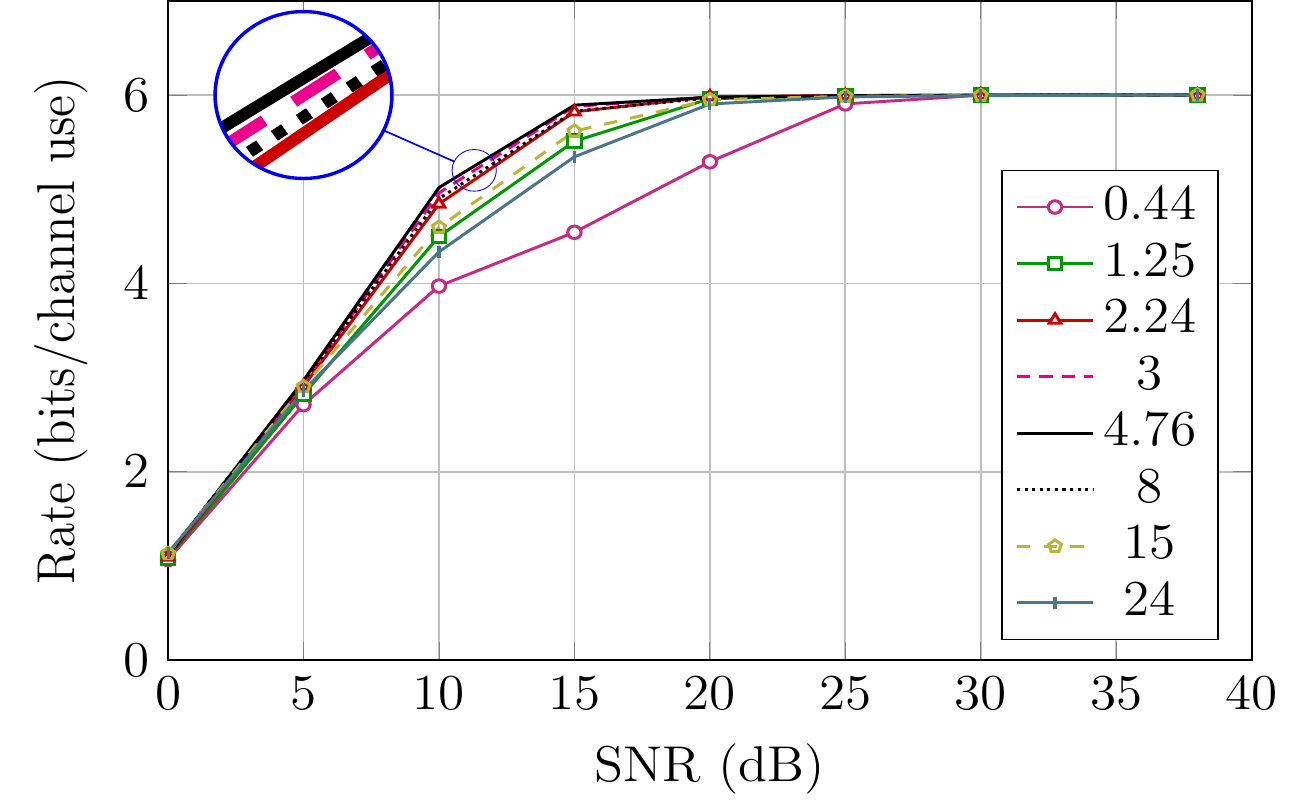}
\caption{Achievable rate of 2-ring/4-ary phase constellation for different $\delta^2$.}
\label{fig:rate24}
\end{figure}
\begin{figure}
\centering
\includegraphics[scale=1]{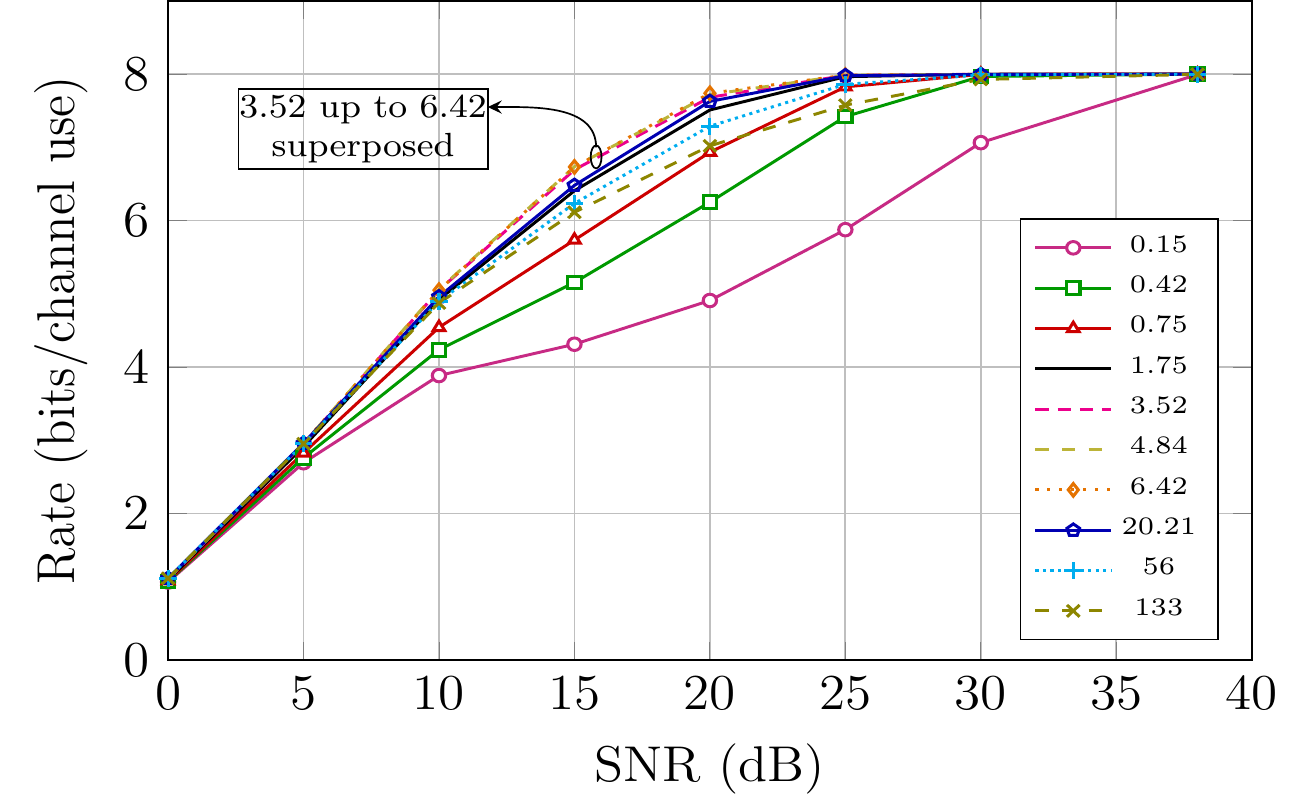}
\caption{Achievable rate of 4-ring/4-ary phase constellation for different $\delta^2$.}
\label{fig:rate44}
\end{figure}
\begin{figure}
\centering
\includegraphics[scale=1]{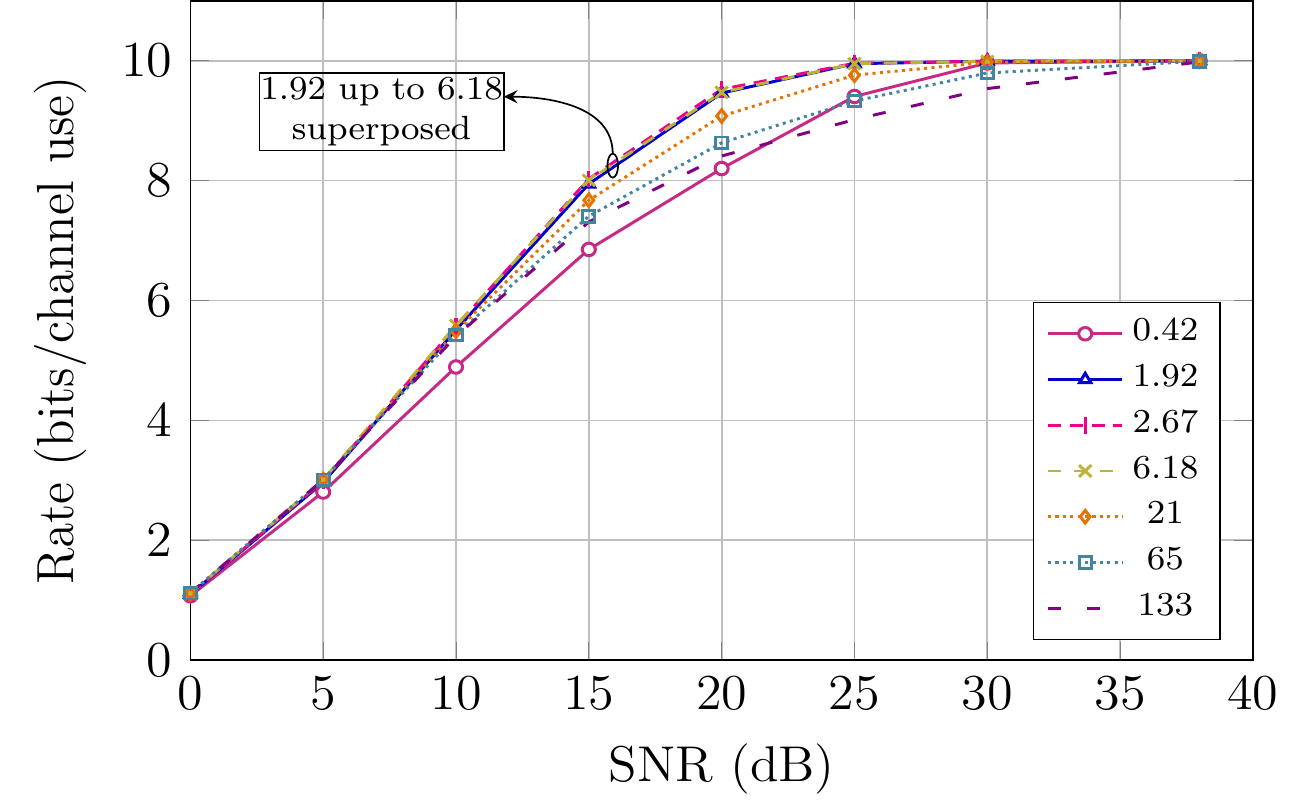}
\caption{Achievable rate of 4-ring/8-ary phase constellation for different $\delta^2$.}
\label{fig:rate48}
\end{figure}
\begin{figure}
\centering
\includegraphics[scale=1]{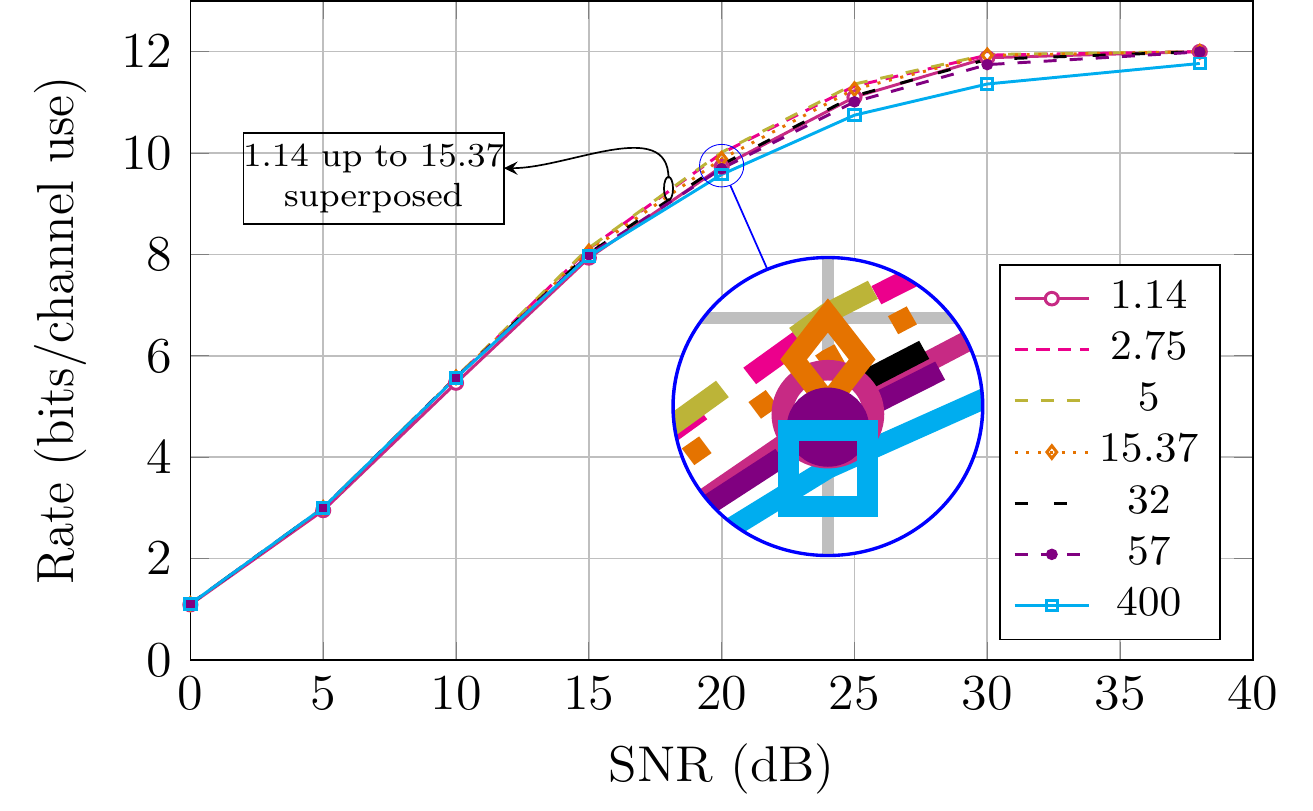}
\caption{Achievable rate of 8-ring/8-ary phase constellation for different $\delta^2$.}
\label{fig:rate88}
\end{figure}
\begin{figure}
\centering
\includegraphics[scale=1]{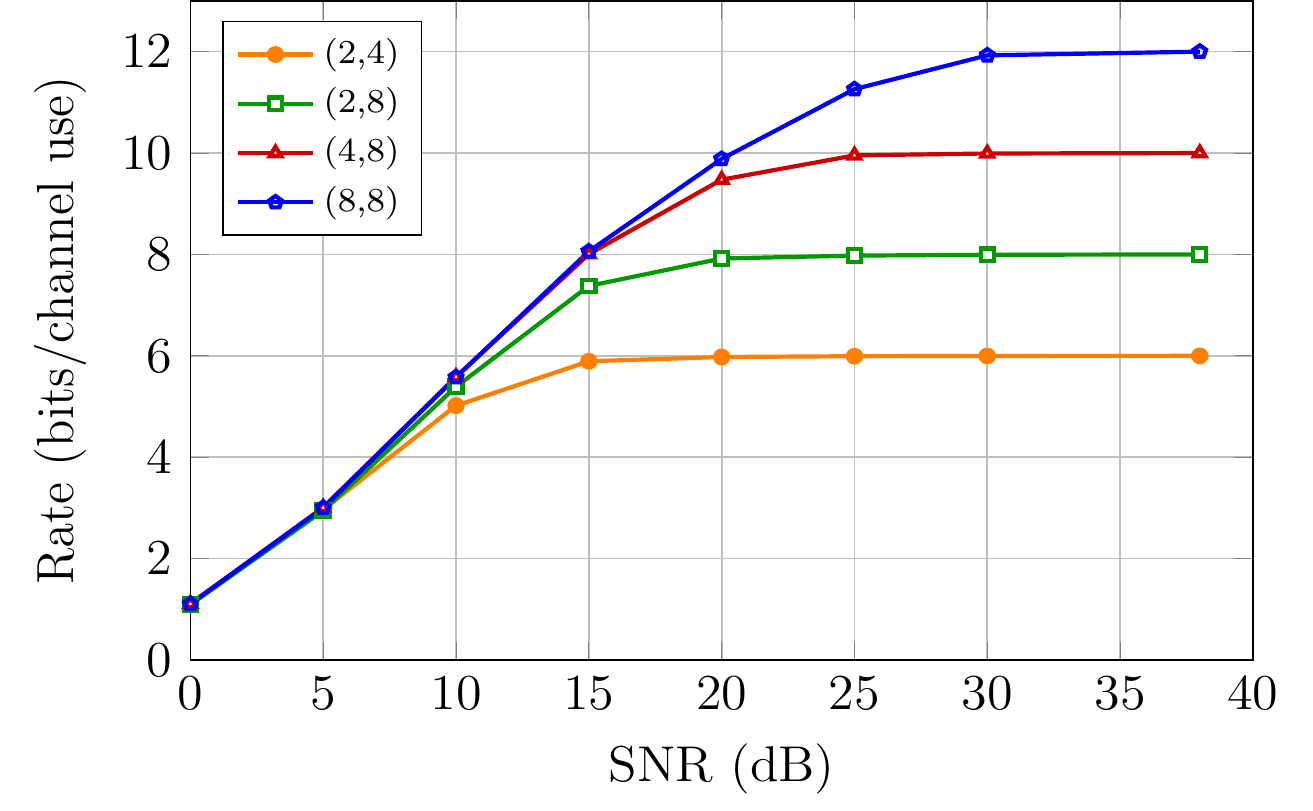}
\caption{Achievable rate of different  $n_r$-ring/$n_p$-ary phase constellations, shown as $(n_r,n_p)$, at their $\delta^2_{\text{bl}}$ (see (\ref{eq:optimal_rho}).)}
\label{fig:rate}
\end{figure}

\end{section}
\begin{section}{Conclusion}
\label{sec:Conclusion}
We computed the maximum likelihood detection rule for symbol-by-symbol and
sequence decoding, in a four-dimensional Stokes-space scheme. To reduce the
complexity of those schemes, we introduced a successive detection method. To
remedy dealing with special functions, we provided a high-SNR approximation of
the detection rules as well. We saw that the optical front-end studied in this
paper subjects the subchannel of one of the dimensions to fast fading. The
decoding methods are compared by simulations. We saw that using the successive
method results a negligible loss, and in addition, the high-SNR approximation
is very accurate (even at low SNRs). Furthermore, the achievable rates of
different constellations are obtained by the Monte Carlo method. 

An interesting future problem would be to design a a good error correcting code
and modulation, specialized for a particular application.  We have assumed that
noise contaminates the signal in the optical domain, while a more comprehensive
model might be to consider an additive noise source in the electrical domain
(after the photo diodes) as well.  In that model, a noise term must be added to
each of the six output values, $w_{1:6}$.   Determing the ML detection rule for
that scheme is left as future research. Throughout this paper, we assumed that
the receiver perfectly knows the channel matrix; it would be interesting to
conduct an analysis to determine how sensitive the detection performance is to
this assumption.

\end{section}
\begin{appendices}
\begin{section}{}
\label{app0}
\begin{proof}[Proof of Theorem~\ref{thrm:phase}] The phase of $R_u$, $\Psi_u$, has a von Mises distribution with the PDF given as~\cite{OnTheProduct}
\[f(\psi_u~\mid~ |r_u|,|k_u|)=\frac{\exp(\lambda_u\cos(\psi_u-\phi_u))}{2\pi\mathcal{I}_0(\lambda_u)}.\]
Note that
\begin{align}
\begin{array}{cc}
\theta' = \phi_x-\phi_y,&\quad\gamma'= \phi_x-\D\phi_y,\\
\theta''=\psi_x-\psi_y,&\quad\gamma''=\psi_x-\D\psi_y,
\end{array}
\label{eq:difference}
\end{align}
and as a result, $\theta''$ and $\gamma''$ are functions of $\psi_x, \psi_y,$ and $\D\psi_y$. Therefore, we use the Jacobian of this transformation to compute the joint conditional PDF of $\Theta''$ and $\Gamma''$ from the joint conditional PDF of $\Psi_x, \Psi_y,$ and $\D\Psi_y$~\cite[p.~244]{papoulis}. To use the Jacobian, the number of random variables before and after the transformation must be the same, which is not true in this case. We introduce a dummy random variable, $\Omega=\Psi_x$, and find the joint conditional PDF of $\Theta'', \Gamma''$, and $\Omega$. Then by marginalizing $\Omega$, we obtain the PDF of interest. 

The determinant of the Jacobian matrix is 
\[\left|\frac{\partial(\Theta'',\Gamma'',\Omega)}{\partial(\Psi_x,\Psi_y,\D\Psi_{y})}\right|=\left|\left[\begin{array}{ccc}
1 & -1 & 0\\
1 & 0 & -1\\
1 & 0 & 0
\end{array}\right]\right|=1,\]
where, {\it e.g.,} the element in the first row and the second column of the Jacobian matrix is $\frac{\partial\Theta''}{\partial\Psi_y}=-1$. As there is a one-to-one correspondence between $(\Theta'',\Gamma'',\Omega)$ and $(\Psi_x,\Psi_y,\D\Psi_{y})$, there is only a unique $(\psi_x,\psi_y,\D\psi_y)$ that contributes to the PDF of $(\theta'',\gamma'',\omega)$. Let $\bm{c}\triangleq[|r_x|,|r_y|,|\D r_y|,\bm{d}_k]^t$ denote the condition vector. Then we have
\begin{align*}
f_{\Theta'',\Gamma'',\Omega\mid \bm{c}}(\theta'',\gamma'',\omega\mid \bm{c})
&=f_{\Psi_x,\Psi_y,\D\Psi_y\mid \bm{c}}(\omega,\omega-\theta'',\omega-\gamma''\mid \bm{c})\\
&\stackrel{(i)}{=}\frac{e^{\lambda_x\cos(\omega-\phi_x)+\lambda_y\cos(\omega-\theta''-\phi_y)+\D\lambda_y\cos(\omega-\gamma''-\D\phi_y)}}{8\pi^3\mathcal{I}_0(\lambda_x)\mathcal{I}_0(\lambda_y)\mathcal{I}_0(\D\lambda_y)}\\
&\stackrel{(ii)}{=}\frac{\exp\left(\frac{|\langle\bm{d}_k,\bm{d}_r\rangle|}{\sigma^2}\cos(\omega-\phi_x+\alpha+\beta)\right)}{8\pi^3\mathcal{I}_0(\lambda_x)\mathcal{I}_0(\lambda_y)\mathcal{I}_0(\D\lambda_y)}.
\end{align*}
where $(i)$ is due to the independence of $\Psi_x,\Psi_y$ and $\D\Psi_y$, and $(ii)$ is true as, by using Fig.~\ref{fig:triangles}, we have
\begin{align*}
\lambda_x e^{i(\omega-\phi_x)}+\lambda_ye^{i(\omega-\theta''-\phi_y)}
&+\D\lambda_ye^{i(\omega-\gamma''-\D\phi_y)}\\
&=
\left(|k_x|\cdot|r_x|+|k_y|\cdot|r_y|e^{i(\theta'-\theta'')}+|\D k_y|\cdot|\D r_y|e^{i(\gamma'-\gamma'')}\right)\cdot
\frac{e^{i(\omega-\phi_x)}}{\sigma^2}\\
&=\frac{|\langle\bm{d}_k,\bm{d}_r\rangle|}{\sigma^2} \exp\left(i(\omega-\phi_x+\alpha+\beta)\right).
\end{align*}
As a result, 
\begin{align*}
\lambda_x\cos(\omega-\phi_x)&+\lambda_y\cos(\omega-\theta''-\phi_y)
+\D\lambda_y\cos(\omega-\gamma''-\D\phi_y)\\
&=\Re\left(\lambda_xe^{i(\omega-\phi_x)}+\lambda_ye^{i(\omega-\theta''-\phi_y)}+\D\lambda_ye^{i(\omega-\gamma''-\D\phi_y)}\right)\\
&=\frac{|\langle\bm{d}_k,\bm{d}_r\rangle|}{\sigma^2}\cos(\omega-\phi_x+\alpha+\beta).
\end{align*}

By marginalizing $\Omega$, we have
\begin{align*}
f_{\Theta'',\Gamma''\mid \bm{c}}(\theta'',\gamma''\mid \bm{c})
&=\int_{0}^{2\pi}\frac{\exp\left(\frac{|\langle\bm{d}_k,\bm{d}_r\rangle|}{\sigma^2}\cos(\omega-\phi_x+\alpha+\beta)\right)}{8\pi^3\mathcal{I}_0(\lambda_x)\mathcal{I}_0(\lambda_y)\mathcal{I}_0(\D\lambda_y)}\dd\omega\\
&\stackrel{(i)}{=}\frac{\int_{0}^{2\pi}\exp\left(\frac{|\langle\bm{d}_k,\bm{d}_r\rangle|}{\sigma^2}\cos(\omega)\right)\dd\omega}{8\pi^3\mathcal{I}_0(\lambda_x)\mathcal{I}_0(\lambda_y)\mathcal{I}_0(\D\lambda_y)}\\
&=\frac{\mathcal{I}_0(\frac{|\langle\bm{d}_k,\bm{d}_r\rangle|}{\sigma^2})}{4\pi^2\mathcal{I}_0(\lambda_x)\mathcal{I}_0(\lambda_y)\mathcal{I}_0(\D\lambda_y)},
\end{align*}
where in $(i)$, we have ignored the constant phase offset, $-\phi_x+\alpha+\beta$, as we are integrating over one period of cosine function. 
\end{proof}

\end{section}
\begin{section}{}
\label{app1}
\begin{proof}[Proof of Theorem~\ref{thrm:suc1}]
We adopt the proof given in~\cite{OnTheProduct}, with some adaptation to be consistent with our notation. Let $\bm{g}=[\hat{\bm{d}}_k,|r_x|,|r_y|]^t$. By the definition of conditional PDF, we have
\[f(|r_x|,|r_y|,\theta''\mid\hat{\bm{d}}_k)=f(|r_x|,|r_y|\mid\hat{\bm{d}}_k) f(\theta''\mid\bm{g}),\]
so, to find $f(|r_x|,|r_y|,\theta''\mid\hat{\bm{d}}_k)$, we compute $f(|r_x|,|r_y|\mid\hat{\bm{d}}_k)$ and $f(\theta''\mid\bm{g})$.

As said in the proof of Theorem~\ref{thrm:radii}, given $|K_u|$, $|R_u|$ becomes independent of $\Theta'$, for $u\in\{x,y\}$. As a result, according to Theorem~\ref{thrm:radii}, we have
\begin{equation}
f(|r_x|,|r_y|\mid \hat{\bm{d}}_k))
=\frac{|r_x|\cdot|r_y|}{\sigma^4}\exp\left(\frac{-(|\hat{\bm{d}}_k|^2+|\hat{\bm{d}}_r|^2)}{2\sigma^2}\right)
\cdot\mathcal{I}_0(\lambda_x)\mathcal{I}_0(\lambda_y).
\label{eq:r}
\end{equation}
To compute $f(\theta''\mid\bm{g})$, note that $\Theta''$ is the subtraction of two independent von Mises random variables (see~(\ref{eq:difference}).) As a result, its PDF is the convolution of two von Mises PDFs, given as
\begin{align}
f(\theta''\mid\bm{g})&=\int_{0}^{2\pi}f_{\Psi_1}(\psi_1)f_{\Psi_2}(\psi_1-\theta'')\dd\psi_1\nonumber\\
&=\frac{\int_{0}^{2\pi}e^{\lambda_x\cos(\psi_1-\phi_1)+\lambda_y\cos(\psi_1-\theta''-\phi_2)}\dd\psi_1}{4\pi^2\mathcal{I}_0(\lambda_x)\mathcal{I}_0(\lambda_y)}\nonumber\\
&\stackrel{(i)}{=}\frac{\int_{0}^{2\pi}\exp\left(\frac{|\langle\hat{\bm{d}}_k,\hat{\bm{d}}_r\rangle|}{\sigma^2}\cos(\psi_1-\phi_1+\alpha)\right)\dd\psi_1}{4\pi^2\mathcal{I}_0(\lambda_x)\mathcal{I}_0(\lambda_y)}\nonumber\\
&=\frac{\mathcal{I}_0\left(\frac{|\langle\hat{\bm{d}}_k,\hat{\bm{d}}_r\rangle|}{\sigma^2}\right)}{2\pi\mathcal{I}_0(\lambda_x)\mathcal{I}_0(\lambda_y)},
\label{eq:angle}
\end{align}
where $(i)$ is true as
\begin{align*}
\lambda_x\cos(\psi_1-\phi_1)+\lambda_y\cos(\psi_1-\theta''-\phi_2)
&=\Re\left(\lambda_xe^{i(\psi_1-\phi_1)}+\lambda_ye^{i(\psi_1-\theta''-\phi_2)}\right)\\
&=\Re\left(\frac{e^{i(\psi_1-\phi_1)}}{\sigma^2}\left(|k_x|\cdot|r_x|+|k_y|\cdot|r_y|e^{i(\theta'-\theta'')}\right)\right)\\
&\stackrel{(ii)}{=}\Re\left(\frac{e^{i(\psi_1-\phi_1)}}{\sigma^2}|\langle\hat{\bm{d}}_k,\hat{\bm{d}}_r\rangle|e^{i\alpha}\right)\\
&=\frac{|\langle\hat{\bm{d}}_k,\hat{\bm{d}}_r\rangle|}{\sigma^2}\cos(\psi_1-\phi_1+\alpha),
\end{align*}
and for $(ii)$ see Fig.~\ref{fig:triangles}~\cite[p.~44]{mardia2009directional}. Using~(\ref{eq:r}) and~(\ref{eq:angle}), we have
\begin{align*}
 f(|r_x|,|r_y|,\theta''\mid\hat{\bm{d}}_k)=
\frac{|r_x|\cdot|r_y|}{2\pi\sigma^4}\exp\left(\frac{-\left(|\hat{\bm{d}}_r|^2+|\hat{\bm{d}}_k|^2\right)}{2\sigma^2}\right)\mathcal{I}_0\left(\frac{|\langle\hat{\bm{d}}_k,\hat{\bm{d}}_r\rangle|}{\sigma^2}\right).
\end{align*}
\end{proof}

\end{section}
\begin{section}{}
\label{app2}
\begin{proof}[Proof of Theorem~\ref{thrm:fourth}]
Given $\hat{\bm{d}}_k$, $\hat{\bm{d}}_r$ becomes independent of $(\gamma',|\D r_y|,|\D k_y|)$. As a result, we have 
\begin{align*}
\underset{\gamma'}{\arg\max}~f(\gamma''\mid\bm{d}_k,|\D r_y|,|r_x|,|r_y|,\theta'')
&=\underset{\gamma'}{\arg\max}~\frac{f(|r_x|,|r_y|,\theta'',\gamma''\mid \bm{d}_k,|\D r_y|)}{f(|r_x|,|r_y|,\theta''\mid\bm{d}_k,|\D r_y|)}\\
&=\underset{\gamma'}{\arg\max}~\frac{f(|r_x|,|r_y|,\theta'',\gamma''\mid \bm{d}_k,|\D r_y|)}{f(|r_x|,|r_y|,\theta''\mid |k_x|,|k_y|,\theta')}\\
&\stackrel{(i)}{=}\underset{\gamma'}{\arg\max}~\frac{\mathcal{I}_0\left(\frac{|\langle\bm{d}_k,\bm{d}_r\rangle|}{\sigma^2}\right)}{\mathcal{I}_0\left(\frac{|\langle\hat{\bm{d}}_k,\hat{\bm{d}}_r\rangle|}{\sigma^2}\right)}\\
&\stackrel{(ii)}{=}\underset{\gamma'}{\arg\max}~|\langle\bm{d}_k,\bm{d}_r\rangle|\\
&=\underset{\gamma'}{\arg\max}~\cos(\gamma'-\gamma''-\alpha),\\
\end{align*}
where $(i)$ is true due to~(\ref{eq:4d}) and Theorem~\ref{thrm:suc1}, and $(ii)$ is true as, at this step, $|\langle\hat{\bm{d}}_k,\hat{\bm{d}}_r\rangle|$ is the same for all feasible $\gamma'$. 
\end{proof}

\end{section}
\end{appendices}
\bibliographystyle{IEEEtran}
\bibliography{main}
\end{document}